\newtheorem{theorem}{Theorem}
\newtheorem{definition}{Definition}
\newtheorem{proposition}{Proposition}
\newtheorem{lemma}[theorem]{Lemma}
\newtheorem{assumption}{Assumption}
\let\origtheassumption\theassumption
\newtheorem*{remark}{Remark}
\title{\LARGE \bf
Estimation of Unknown Payoff Parameters in Large Network Games  
}
\author{Feras Al Taha$^{1}$ and Francesca Parise$^{1}$
\thanks{This work was supported in part by the Natural Sciences and Engineering Research Council of Canada.}
\thanks{$^{1}$Feras Al Taha and Francesca Parise are with the School of Electrical and Computer Engineering,
        Cornell University, NY 14850, USA
        {\tt\small \{foa6,fp264\}@cornell.edu}}%
}
\begin{document}

\maketitle
\thispagestyle{empty}
\pagestyle{empty}

\begin{abstract}

We consider network games where a large number of agents interact according to a network sampled from a random network model, represented by a graphon. 
By exploiting previous results on convergence of such large network games to graphon games, we examine a procedure for estimating unknown payoff parameters, from observations of equilibrium actions, without the need for exact network information. 
We prove smoothness and local convexity of the optimization problem involved in computing the proposed estimator.
Additionally, under a notion of graphon parameter identifiability, we show that the optimal estimator is globally unique.
We present several examples of identifiable homogeneous and heterogeneous parameters in different classes of linear quadratic network games with numerical simulations to validate the proposed estimator. 

\end{abstract}

\section{INTRODUCTION}

Systems involving very large numbers of autonomous agents making strategic decisions and influencing each other over a network structure are becoming ubiquitous.
For example, they appear in applications involving power and traffic networks in engineering settings as well as product adoption, targeted marketing, and opinion dynamics in socio-economic settings. 
Studying how agents make decisions in these complex environments is a fundamental prerequisite for the successful design of interventions and control laws aimed at improving  welfare or system performance. 
To this end, game theoretical principles are typically used to model agents' decisions via payoff maximization, resulting in the concept of Nash equilibrium (i.e., a set of actions in which no agent has interest in unilateral deviations) as a solution outcome.
When translating these results to practice, however, a main issue emerges: while the parametric form of agents' payoff functions might be known, in most applications  the parameters themselves are not.
For example, in games capturing agents' decisions under peer pressure, the strength of neighbors' peer effect on an individual's marginal return might not be known \cite{JACKSON201595} and, in fact, may vary for different network instances (e.g. different schools, neighborhoods, etc.).
In these settings, it is then of paramount importance to understand whether a central planner can estimate the unknown parameters from observations of agents' actions at equilibrium.
This ability would indeed enable the central planner to design interventions steering agents towards equilibria with improved welfare or system efficiency \cite{parise2019graphon,parise2021analysis,galeotti2020targeting}.

Starting from the seminal work of Bramoullé et al. \cite{bramoulle2009identification}, a large literature studied the above question under the assumption that the planner knows  the network over which  agents interact.
When we turn our attention to applications involving a large number of agents, however, collecting data about the exact network of interactions can become very expensive or not at all possible because of privacy concerns. 
Consequently, recent works started investigating parameter estimation  under  partial or statistical network information \cite{de2018recovering,boucher2020estimating,lewbel2019social,chandrasekhar2011econometrics}.
Importantly, all the works cited above focus on the specific problem of estimating the peer effect parameter in linear in means models \cite{USHCHEV2020104969}.
The key objective of this paper is to develop a general parameter estimation procedure that: i) relies  only on statistical instead of exact information about network interactions and ii) can be applied for parameter estimation in generic network games.

To obtain such a result, we build on  the framework of graphon games  recently proposed in \cite{parise2019graphon}. 
Graphon games are  games played over a continuum of agents  that interact heterogeneously according to a graphon. 
Building on an interpretation of graphons as random network models  (which generalizes for example Erd\"os-R\'enyi  and  stochastic block models (SBM)~\cite{lovasz2012large}), \cite{parise2019graphon} shows that equilibria of  network games in which the network of interactions is sampled from the graphon (termed \textit{sampled network games}) converge, in the limit of large populations, to the equilibrium of the corresponding graphon game. Equilibria of graphon games can thus be seen as an approximation of strategic behavior in large network games, computed by using only information about the random network model.
Based on this result, \cite{parise2019graphon} suggests a novel procedure for payoff parameter estimation in sampled network games without the need for exact network data.
Specifically, given an observation of the equilibrium of a network game with unknown parameters, the proposed approach consists of selecting as estimator the parameters for which the equilibrium of the corresponding graphon game is closest to the observed equilibrium.
It is shown in \cite{parise2019graphon} that this estimator is asymptotically consistent if the parameter satisfies an identifiability assumption capturing games in which equilibria that are close are generated by parameters that are also close.

In this work, we address two main open problems related to the  estimation procedure detailed above. 
First, finding the parameter for which the graphon game equilibrium is the closest to the observed equilibrium requires the solution of an optimization problem. 
We here show that the objective function of such an optimization problem is smooth and locally strictly convex around the true parameter. 
Moreover, under the identifiability assumption above, we show that the optimization problem admits a unique global optimizer, thus guaranteeing that the proposed estimator is unique.
Second, we prove that the required identifiability assumption holds for several examples of linear quadratic (LQ) network games with both homogeneous and heterogeneous parameters.
We validate the convergence of the proposed estimator on these games with numerical simulations.

This paper is part of a growing literature that studies strategic behavior in large network games (see e.g.  \cite{grammatico2017proximal,zhu2022asynchronous, koshal2016distributed,belgioioso2020distributed,shokri2021network,parise2021analysis,parise2020distributed}), in particular using graphons \cite{caines2021graphon, carmona2022stochastic, gao2019graphon,avella2018centrality}.
However, none of the works cited above focuses on parameter estimation.

The rest of the paper is organized as follows.
Section~\ref{formulation} presents the network game setup and its connection to graphon games.
Section~\ref{problem} introduces the parameter estimation problem.  
Section~\ref{guarantees} provides our main result on properties of the corresponding optimization problem.
Section~\ref{LQ_games_unconstr} provides examples of identifiable parameters in different linear quadratic network games and Section~\ref{sec:sims} demonstrates the convergence of the estimator with numerical simulations.
Omitted proofs are given in the Appendix.

\textit{Notation:} We denote by $L^2([0, 1])$ the space of square integrable functions defined on $[0,1]$ and by $L^2([0,1];\mathbb{R}^n)$ the space of square integrable vector valued functions defined on $[0,1]$.
The norms on these spaces are  $\|v\|:=\sqrt{\sum_{i=1}^n v_i^2}$, $\|f\|_{L^2} := \sqrt{\int_0^1 f(x)^2 dx}$ and $\|g\|_{L^2;\mathbb{R}^n} := \sqrt{\int_0^1 \|g(x)\|^2 dx}$ where $v\in\mathbb{R}^n$, $f\in L^2([0,1])$ and $g \in L^2([0,1];\mathbb{R}^n)$. 
Additionally, we denote by $\|\cdot\|_{\infty}$ the uniform norm (or sup norm) of an operator.
We denote by $[v]_j$ the $j$th component of a vector $v\in\mathbb{R}^n$ and by $[A]_{ij}$ the $ij$th entry of a matrix $A\in\mathbb{R}^{m\times n}$.
The symbol $\mathds{1}$ denotes the vector of all ones (with appropriate dimension) and $\mathds{1}(x)$ the function constantly equal to one on the unit interval.
The symbol $\mathbb{I}$ denotes the identity operator.

\section{Recap on Finite and Infinite Network Games} \label{formulation}

\subsection{Finite network games} \label{sec:network_game}

Network games can be used to model settings where a \textit{finite number  of agents} interact strategically over a network. In the following, we represent the network of interactions with its adjacency matrix $P\in \mathbb{R}^{N\times N}$ with diagonal entries equal to zero (i.e., without self-loops) and assume that each agent $i\in\{1,...,N\}$ aims at selecting a scalar strategy\footnote{We assume scalar strategies for simplicity of exposition. Similar arguments can be made for  vector strategies.} $s^i \in \mathbb{R}$ in a feasible set $\mathcal{S} \subseteq \mathbb{R}$ to maximize a payoff function 
\begin{align}
    U(s^i, z^i(s), \theta^i) \label{network_payoff}
\end{align}
where $s:= [s^i]_{i=1}^N \in \mathbb{R}^N$ is the strategy profile, $z^i(s) := \frac{1}{N} \sum_{j=1}^N P_{ij} s^j$ denotes the local aggregate computed according to the network and $\theta^i \in \Theta \subseteq \mathbb{R}^m$ 
models heterogeneity in the payoff functions of different agents.
We remark that the local aggregate $z^i(s)$ of an agent does not include its own strategy $s^i$ (since $P_{ii}=0$).
The model is said to be homogeneous across agents when $\theta^i = \theta $ for all $i$.

\begin{definition}[Nash equilibrium]A strategy $\bar{\bar{s}} \in \mathbb{R}^N$ with associated local aggregate $\bar{\bar{z}}:=[\bar{\bar{z}}^i]_{i=1}^N$ where $\bar{\bar{z}}^i := z^i(\bar{\bar{s}})$ is a Nash equilibrium if for all $i\in\{1,\hdots,N\}$ , we have $\bar{\bar{s}}^i \in \mathcal{S}$ and
\begin{align*}
    U(\bar{\bar{s}}^i, \bar{\bar{z}}^i, \theta^i) \ge U(\tilde{s}, \bar{\bar{z}}^i,\theta^i) \qquad \textrm{for all } \tilde{s} \in \mathcal{S}.
\end{align*}
\end{definition}

\subsection{Graphon games}\label{sec:graphon}

A graphon game is defined in terms of a \textit{continuum of agents}, indexed by $x \in [0,1]$, that interact heterogeneously according to a symmetric and measurable graphon $W: [0,1]^2 \mapsto [0,1]$.  Intuitively, $W(x,y)$ measures the level of interaction between infinitesimal agents $x$ and $y$.
As in network games, the goal of each agent in a graphon game is to select a strategy $s(x)\in \mathcal{S}$ to maximize their payoff 
\begin{align}
    U(s(x), z(x|s), \theta(x)) \label{graphon_payoff}
\end{align}
where $z(x|s) := \int_0^1 W(x,y) s(y) dy$ is the local aggregate experienced by agent $x$ and $\theta : [0,1] \to \Theta \subseteq \mathbb{R}^m$ is a function modelling payoff heterogeneity across agents.
Note that the payoff function $U$ is the same payoff function as in \eqref{network_payoff}; 
the only difference is how the network aggregate is computed.

\begin{definition}[Graphon Nash equilibrium] A function $\bar{s}\in L^2([0,1])$ with associated local aggregate $\bar{z}(x):= z(x|\bar{s}) = \int_0^1 W(x,y) \bar{s}(y)dy$ is a Nash equilibrium for the graphon game if for all $x\in [0,1]$,  we have $\bar{s}(x)\in\mathcal{S}$ and
\begin{align*}
    U(\bar{s}(x),\bar{z}(x),\theta(x)) \ge U(\tilde{s},\bar{z}(x),\theta(x)) \quad \textrm{for all } \tilde{s} \in \mathcal{S}.
\end{align*}
\end{definition}

Conditions for existence and uniqueness of the graphon Nash equilibrium are derived in \cite{parise2019graphon} in terms of  properties of the graphon operator $\mathbb{W}: L^2([0,1]) \mapsto L^2([0,1])$ given by
\begin{align*}
    f(x) \mapsto (\mathbb{W} f)(x) = \int_0^1 W(x,y)f(y)dy,
\end{align*}
which intuitively plays the same role as the adjacency matrix for finite networks. These conditions are summarized in the following assumption.

\edef\oldassumption{\the\numexpr\value{assumption}+1}
\setcounter{assumption}{0}
\renewcommand{\theassumption}{\oldassumption.\alph{assumption}}

\begin{assumption}[Existence and uniqueness] \label{a:exist_uniq}
\begin{enumerate}[label=(\roman*)]
    \item []
    \item \label{a:smooth_convex} The function $U(s,z, \theta)$ in (\ref{graphon_payoff}) is continuously differentiable and strongly concave in $s$ with uniform constant $\alpha_U$ for all  $z$ and $\theta$.
Moreover, $\nabla_s U(s,z,\theta)$ is uniformly Lipschitz in $z$ and $\theta$ with constants $\ell_U, \ell_{\theta}$ for all $s$.
\item \label{a:strategy_set} The set $\mathcal{S}$ is convex and compact, so that $s_{max} := \max_{s\in \mathcal{S}} \|s\| < \infty$.
\item \label{a:contraction} 
The largest eigenvalue $\lambda_{\rm max}(\mathbb{W})$ of the graphon operator $\mathbb{W}$ satisfies the bound $\lambda_{\rm max}(\mathbb{W}) < \frac{\alpha_U}{\ell_U}$. 
\end{enumerate}
\end{assumption}
 
\begin{remark}
Under conditions \ref{a:smooth_convex} and \ref{a:strategy_set}, existence of a Nash equilibrium follows from standard fixed point argument.
Condition \ref{a:contraction} guarantees that the best response mapping is a contraction from which uniqueness follows. 
See Theorems~1 and 2 in \cite{parise2019graphon}.
\end{remark}

\subsection{Sampled network games} \label{sec:sampled_eq}

Besides being of interest as models of heterogeneous interactions in infinite populations, graphons can be used as random network models \cite{lovasz2012large}. 
Specifically, given any graphon~$W$, a \textit{sampled network} can be obtained by uniformly and independently sampling $N$ points\footnote{Without lost of generality, the points $\{t_i\}_{i=1}^N$ are assumed to be ordered such that $t_i\le t_{i+1}$, $i=1,\dots,N-1$, since the nodes can be relabeled.} $\{t_i\}_{i=1}^N$ from $[0, 1]$ and by defining a 0-1 adjacency matrix $P^{[N]}$ corresponding to a graph with $N$ nodes, no self-loops (i.e., $P_{ii}^{[N]}=0$ for all $i$) and random links $(i,j)$  sampled with Bernoulli probability $W(t_i,t_j)$. 
Graphons can therefore be used to encode statistical information about the likelihood of agents' interactions, with the understanding that the network observed in reality (i.e., $P^{[N]}$) is one possible realization of such random network model. 
Building on this statistical interpretation of graphons, \cite{parise2019graphon} shows that, for $N$ large enough, graphon Nash equilibria (as defined in Section \ref{sec:graphon}) are a good approximation of strategic behavior in any finite network game (as defined in Section~\ref{sec:network_game}) where agents interact over a network sampled from the graphon (which we term a \textit{sampled network game}\footnote{Since the maximum eigenvalue of the sampled network $P^{[N]}$ converges almost surely to the maximum eigenvalue of the graphon $W$, Assumption~\ref{a:exist_uniq} guarantees existence and uniqueness of equilibria in sampled network games for $N$ large enough with high probability.}).

\begin{remark}
Equilibria in finite network games are vectors instead of functions. 
To obtain comparable objects, we define a piecewise constant interpolation\footnote{Rather than interpolating the equilibria about the points $\{t_i\}_{i=1}^N$, we interpolate them about a regular grid $\{i/N\}_{i=1}^N$ so that the players' strategies are assigned equal weight.} of the network game equilibrium $\bar{\bar{s}}^{[N]}$ as a function $\bar s^{[N]}(x)=\left[\bar{\bar{s}}^{[N]}\right]_i$ for all $x\in\left[\frac{i-1}{N},\frac{i}{N}\right].$ 
In the following, we use the notation $\bar{\bar{s}}^{[N]}\in\mathbb{R}^N$ for a vector-valued equilibrium and $\bar s^{[N]}\in L^2([0,1]) $ for its interpolation.
\end{remark}

\begin{proposition}[{\cite[Theorem 2]{parise2019graphon}}] \label{prop:convergence}
Consider a graphon game satisfying Assumption \ref{a:exist_uniq} with unique Nash equilibrium $\bar{s}\in L^2([0,1])$. 
Let $\bar{s}^{[N]}\in L^2([0,1])$ be a piecewise constant interpolation of the equilibrium of a network game sampled from this graphon game. 
Then, $\| \bar{s}^{[N]} - \bar{s} \|_{L^2} \overset{a.s.}{\to} 0 \textrm{ as } N \to \infty$.
\end{proposition}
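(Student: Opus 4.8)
The plan is to write both equilibria as fixed points of best-response operators and then bound the distance between these fixed points by the distance between the operators. Since $U(\cdot,z,\theta)$ is continuously differentiable and strongly concave on the convex compact set $\mathcal{S}$ (conditions \ref{a:smooth_convex}--\ref{a:strategy_set}), for each pair $(z,\theta)$ there is a unique maximizer $b(z,\theta)\in\mathcal{S}$ of $U(\cdot,z,\theta)$. Its first-order optimality condition is a strongly monotone variational inequality whose solution depends Lipschitz-continuously on the data, so a standard sensitivity argument gives that $b$ is Lipschitz in $z$ with constant $\ell_U/\alpha_U$ and Lipschitz in $\theta$ with constant $\ell_\theta/\alpha_U$. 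With this notation the graphon equilibrium satisfies $\bar{s}(x)=b((\mathbb{W}\bar{s})(x),\theta(x))$, i.e.\ $\bar{s}$ is the fixed point of the operator $T$ defined by $(Tf)(x):=b((\mathbb{W}f)(x),\theta(x))$.

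First I would build the analogous operator for the sampled game. Let $\mathbb{W}_N$ be the integral operator whose kernel equals $P_{ij}^{[N]}$ on $[\tfrac{i-1}{N},\tfrac{i}{N})\times[\tfrac{j-1}{N},\tfrac{j}{N})$. A direct computation shows that $(\mathbb{W}_N\bar{s}^{[N]})(x)=\tfrac1N\sum_{j=1}^N P_{ij}^{[N]}\,[\bar{\bar{s}}^{[N]}]_j$, the $i$th local aggregate of the sampled equilibrium, for $x\in[\tfrac{i-1}{N},\tfrac{i}{N})$, so that defining $(T_Nf)(x):=b((\mathbb{W}_Nf)(x),\theta(t_i))$ on that interval makes the interpolated network equilibrium exactly the fixed point $\bar{s}^{[N]}=T_N\bar{s}^{[N]}$.

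Next I would exploit the contraction property. Using the Lipschitz bound on $b$ together with $\|\mathbb{W}\|_\infty=\lambda_{\rm max}(\mathbb{W})$, the operator $T$ satisfies $\|Tf-Tg\|_{L^2}\le\tfrac{\ell_U}{\alpha_U}\lambda_{\rm max}(\mathbb{W})\|f-g\|_{L^2}$, hence by condition \ref{a:contraction} it is a contraction with constant $c:=\tfrac{\ell_U}{\alpha_U}\lambda_{\rm max}(\mathbb{W})<1$. Adding and subtracting $T\bar{s}^{[N]}$ in the fixed-point identities and using this contraction yields
\begin{align*}
\|\bar{s}^{[N]}-\bar{s}\|_{L^2}=\|T_N\bar{s}^{[N]}-T\bar{s}\|_{L^2}\le\frac{1}{1-c}\,\bigl\|(T_N-T)\bar{s}^{[N]}\bigr\|_{L^2}.
\end{align*}
Because every equilibrium takes values in $\mathcal{S}$, we have $\|\bar{s}^{[N]}\|\le s_{max}$ uniformly in $N$, and applying the Lipschitz bounds on $b$ once more reduces the claim to showing that
\begin{align*}
\bigl\|(\mathbb{W}_N-\mathbb{W})\bar{s}^{[N]}\bigr\|_{L^2}\overset{a.s.}{\to}0,
\end{align*}
together with the contribution of the piecewise constant sampling $\theta(t_i)$ of the heterogeneity profile, which vanishes under mild regularity of $\theta$.

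The main obstacle is this operator-convergence step, since it is the only place where the randomness of the sampling enters. I would split it through the expected-kernel operator $\mathbb{W}_N^{\mathbb{E}}$ with kernel $W(t_i,t_j)$ on each cell: the difference $\mathbb{W}_N-\mathbb{W}_N^{\mathbb{E}}$ collects the independent mean-zero Bernoulli fluctuations $P_{ij}^{[N]}-W(t_i,t_j)$ and vanishes almost surely by a concentration/strong-law argument, while $\mathbb{W}_N^{\mathbb{E}}-\mathbb{W}$ is a Riemann-sum-type discretization error that vanishes almost surely for i.i.d.\ uniform $\{t_i\}_{i=1}^N$ under mild regularity of $W$. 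The delicate points are making this almost-sure control uniform over the $N$-dependent functions $\bar{s}^{[N]}$, which is handled via the uniform bound $s_{max}$, and confirming that $T_N$ is eventually a well-defined contraction, which follows from the convergence of $\lambda_{\rm max}(\mathbb{W}_N)$ to $\lambda_{\rm max}(\mathbb{W})$ noted in the footnote. Combining the two estimates gives the operator convergence, and the perturbation bound then delivers the claim.
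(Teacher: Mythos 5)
The paper offers no proof of this proposition at all---it is imported directly from \cite[Theorem 2]{parise2019graphon}---and your proposal is, in outline, a faithful reconstruction of the argument given in that reference: fixed-point characterizations of both equilibria via the best response $b(z,\theta)$ (which is $\ell_U/\alpha_U$-Lipschitz in $z$ by strong concavity), a contraction-perturbation bound reducing everything to $\|(T_N-T)\bar{s}^{[N]}\|_{L^2}\to 0$, and a split of $\mathbb{W}_N-\mathbb{W}$ into a mean-zero Bernoulli-fluctuation term (killed by concentration) plus a discretization term (killed by regularity of $W$ and the order statistics of the uniform samples). Two caveats, neither fatal: the ``mild regularity'' of $W$ and $\theta$ that you invoke is precisely the piecewise-Lipschitz hypothesis that the cited theorem imposes and that the statement as reproduced here suppresses, so your proof is of the theorem actually proved in \cite{parise2019graphon} rather than of the literal statement above; and your contraction constant should be stated with the $L^2$ operator norm of $\mathbb{W}$, which equals $\lambda_{\rm max}(\mathbb{W})$ by self-adjointness of the symmetric kernel---not with $\|\mathbb{W}\|_{\infty}$, which the paper explicitly notes satisfies $\lambda_{\rm max}(\mathbb{W})<\|\mathbb{W}\|_{\infty}$.
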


The key importance of this result is that the graphon equilibrium can be computed by relying only on information about the random network model, without the need for information about exact agent interactions ($P^{[N]}$). 
We next show how this key observation can be used for parameter estimation in settings in which the central planner does not have full network knowledge.

\section{The parameter estimation problem} \label{problem}

In many applications of interest, agents' payoffs may depend on parameters that are unknown to the central planner. 
In the following, we capture this aspect by assuming that the heterogeneity vector $\theta$, which characterizes agent-specific behavior, may depend on some unknown parameter $\eta \in \Xi \subseteq \mathbb{R}^n$ and we stress this dependence with the notation  $\theta_{\eta}$. This paper addresses the task of  identifying $\eta$ from the observation of a sampled equilibrium $\bar{\bar{s}}^{[N]} \in \mathbb{R}^N$ and the labels $\{t_i\}_{i=1}^N$, which are assumed to be known since they can represent an observable trait of the players (e.g., their community or geographical location). 
While most of the literature focused on settings in which $P^{[N]}$ is known, we here assume that the central planner cannot observe the sampled network, but instead has information about the random network model (i.e. the graphon).
With this information, the central planner can compute the graphon Nash  equilibrium corresponding to any possible choice of parameter $\eta$.
Building on Proposition~\ref{prop:convergence}, the central planner can then estimate the true parameter $\eta$ by choosing as estimator the parameter $\hat{\eta}$ which yields the closest graphon equilibrium to the observed equilibrium. 
Mathematically, we define the estimator
\begin{align}
    \hat{\eta} := \arg \min_{\eta \in \Xi} \quad \left \| \bar{s}^{[N]} - \bar{s}_\eta \right \|_{L^2}^2 \label{eta_estimate},
\end{align}
where $\bar{s}^{[N]}$ is the observed equilibrium sampled from a graphon game with true parameter $\bar{\eta}$, $\bar{s}_\eta \in L^2([0,1])$ is the equilibrium of the graphon game with parameter $\eta$ and $\Xi \subseteq \mathbb{R}^n$ is the set of admissible parameter values. 
To guarantee uniqueness of $\bar{s}_\eta$, we make the following assumption.

\begin{assumption}[Existence and uniqueness for all $\eta\in\Xi$] \label{a:feas_eta}
For all $\eta\in\Xi$, the graphon game with parameter $\eta$ satisfies Assumption \ref{a:exist_uniq}.
\end{assumption}

We study the performance of the estimator suggested in~\eqref{eta_estimate}, under the following assumption.

\let\theassumption\origtheassumption
\setcounter{assumption}{\the\numexpr\value{assumption}-1}

\begin{assumption}[Identifiability] \label{a:identifiability} 
Suppose that Assumption \ref{a:feas_eta} holds.
The true parameter $\bar{\eta} \in \Xi$ is \textit{identifiable}, that is, there exists $ L_{\bar{\eta}}>0$ such that
\begin{align}
    \| \bar{\eta} - \eta \| \le L_{\bar{\eta}} \| \bar{s}_{\bar{\eta}} - \bar{s}_{\eta} \|_{L^2} \label{identifiability} \quad \forall \eta \in \Xi.
\end{align}
\end{assumption}

Intuitively, the identifiability assumption is needed because if two arbitrarily close graphon equilibria could  be generated by two significantly different parameters, then it would be impossible to identify $\bar{\eta}$ from a single observation of a sampled equilibrium.
It follows immediately from Proposition~\ref{prop:convergence} that, under Assumption~\ref{a:identifiability}, the estimator defined in (\ref{eta_estimate}) is asymptotically consistent in the limit of infinite population.

\begin{proposition}[{\cite[Corollary 1]{parise2019graphon}}] \label{prop:estimate_conv}
Suppose that Assumptions \ref{a:feas_eta} and \ref{a:identifiability} hold.
Then,
\begin{align*}
    \| \hat{\eta} - \bar{\eta} \| \ \overset{\textrm{a.s.}}{\to}\  0 \textrm{ as } N \to \infty.
\end{align*}
\end{proposition}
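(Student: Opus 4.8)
The plan is to combine the almost-sure consistency of the sampled equilibrium (Proposition~\ref{prop:convergence}) with the identifiability inequality (Assumption~\ref{a:identifiability}), exploiting the fact that $\hat{\eta}$ is by construction a minimizer of the objective in~\eqref{eta_estimate}. The overall strategy is to upper bound the parameter error $\|\hat{\eta}-\bar{\eta}\|$ by an equilibrium distance that we can show vanishes, so that identifiability transfers the vanishing equilibrium distance into a vanishing parameter distance.

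First I would apply the identifiability inequality~\eqref{identifiability} with the feasible choice $\eta=\hat{\eta}\in\Xi$, which immediately gives
\begin{align*}
\|\hat{\eta}-\bar{\eta}\| \le L_{\bar{\eta}}\,\|\bar{s}_{\bar{\eta}}-\bar{s}_{\hat{\eta}}\|_{L^2}.
\end{align*}
It therefore suffices to prove that the right-hand side converges to zero almost surely. To this end I would insert the observed equilibrium $\bar{s}^{[N]}$ via the triangle inequality,
\begin{align*}
\|\bar{s}_{\bar{\eta}}-\bar{s}_{\hat{\eta}}\|_{L^2} \le \|\bar{s}_{\bar{\eta}}-\bar{s}^{[N]}\|_{L^2} + \|\bar{s}^{[N]}-\bar{s}_{\hat{\eta}}\|_{L^2},
\end{align*}
and control the two terms separately.

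The first term vanishes almost surely directly by Proposition~\ref{prop:convergence}, since $\bar{s}^{[N]}$ is the interpolated equilibrium of a network game sampled from the graphon game with the \emph{true} parameter $\bar{\eta}$, whose unique graphon equilibrium is $\bar{s}_{\bar{\eta}}$. For the second term, the key observation is the optimality of $\hat{\eta}$: since $\hat{\eta}$ minimizes $\|\bar{s}^{[N]}-\bar{s}_{\eta}\|_{L^2}^2$ over $\Xi$ and $\bar{\eta}\in\Xi$ is an admissible competitor, we obtain $\|\bar{s}^{[N]}-\bar{s}_{\hat{\eta}}\|_{L^2}\le \|\bar{s}^{[N]}-\bar{s}_{\bar{\eta}}\|_{L^2}$, which again tends to zero almost surely by Proposition~\ref{prop:convergence}. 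Combining the two bounds yields $\|\bar{s}_{\bar{\eta}}-\bar{s}_{\hat{\eta}}\|_{L^2}\le 2\,\|\bar{s}^{[N]}-\bar{s}_{\bar{\eta}}\|_{L^2}\overset{a.s.}{\to}0$, and the identifiability bound above then delivers $\|\hat{\eta}-\bar{\eta}\|\overset{a.s.}{\to}0$.

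I do not expect a serious obstacle: the argument reduces to a triangle-inequality sandwich together with the optimality of $\hat{\eta}$ and the almost-sure convergence of Proposition~\ref{prop:convergence}, with identifiability playing exactly the role of converting equilibrium closeness into parameter closeness. The only point requiring mild care is that the estimator $\hat{\eta}$ in~\eqref{eta_estimate} be well defined, i.e.\ that a minimizer exists for each $N$; this is underwritten by Assumption~\ref{a:feas_eta}, which guarantees that $\eta\mapsto\bar{s}_{\eta}$ is well defined on $\Xi$, so that the chain of inequalities can be applied at the realized minimizer without further complication.
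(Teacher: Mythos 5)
Your proof is correct and is essentially the argument the paper relies on: the same chain of inequalities (identifiability, triangle inequality through $\bar{s}^{[N]}$, and optimality of the minimizer giving the factor $2L_{\bar{\eta}}$) appears verbatim in step 3 of the paper's proof of Theorem~\ref{thm:uniqueness}, and the proposition itself is stated to follow immediately from Proposition~\ref{prop:convergence} and Assumption~\ref{a:identifiability} in exactly this way. No gaps; your remark about well-definedness of the minimizer under Assumption~\ref{a:feas_eta} is an appropriate point of care.
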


Overall, the results detailed so far provide a procedure for estimating unknown parameters of sampled network games under two key assumptions. First, one needs to be able to solve the optimization problem in \eqref{eta_estimate}. Second, one needs to be able to verify parameter identifiability as defined in Assumption \ref{a:identifiability}. In the rest of the paper, we investigate these two points. Specifically, in Section \ref{guarantees}, we study properties of the optimization problem in \eqref{eta_estimate}, guaranteeing for example uniqueness of the solution. In Section \ref{LQ_games_unconstr}, we instead investigate parameter identifiability for common LQ network games. 

\section{Parameter estimation properties} \label{guarantees}

In this section, we present results on the smoothness and local convexity of problem \eqref{eta_estimate}.
To this end, we make the following additional assumptions guaranteeing local convexity and smoothness of the graphon equilibrium with respect to parameter variations.

\begin{assumption}[Convex parameter set] \label{a:convex_xi}
The parameter set $\Xi \in \mathbb{R}^n$ is a convex set and contains the true parameter $\bar{\eta}$ in its interior. 
\end{assumption}

\begin{assumption}[Smoothness of equilibrium] \label{a:smooth}
 Under Assumption \ref{a:feas_eta}, the equilibrium $\bar{s}_\eta(x)$ is twice Lipschitz continuously differentiable in $\eta$, uniformly in $x$.
\end{assumption}

While this assumption may seem restrictive, we demonstrate in Section \ref{LQ_games_unconstr} that it holds for various classes of LQ games.
Under the above assumptions, we next derive our main theorem on regularity properties of the objective function in \eqref{eta_estimate}, which we denote by
\begin{align}
    J(\eta) := \| \bar{s}^{[N]} - \bar{s}_\eta \|_{L^2}^2. \label{cost_function}
\end{align}

\begin{theorem} \label{thm:uniqueness} 
Suppose that Assumptions \ref{a:feas_eta}, \ref{a:identifiability}, \ref{a:convex_xi} and \ref{a:smooth} hold. 
Then, 
\begin{enumerate} 
\item $J(\eta)$ is $L$-smooth.
\end{enumerate}
Moreover, for all $\delta > 0$, there exists $\bar{N}>0$ such that with probability $1-\delta$, for all $N>\bar{N}$,
\begin{enumerate}\addtocounter{enumi}{1}
    \item the function $J(\eta)$ in \eqref{cost_function} is locally strictly convex around the true parameter $\bar{\eta}$ and 
    \item the optimization problem (\ref{eta_estimate}) has a globally unique solution.
\end{enumerate}
\end{theorem}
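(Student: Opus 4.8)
The plan is to differentiate \eqref{cost_function} twice under the integral sign, which is justified by Assumption~\ref{a:smooth}, obtaining the gradient $\nabla J(\eta) = -2\int_0^1 (\bar{s}^{[N]}(x) - \bar{s}_\eta(x))\,\nabla_\eta \bar{s}_\eta(x)\,dx$ and the Hessian
\begin{align*}
\nabla^2 J(\eta) ={}& 2\int_0^1 \nabla_\eta \bar{s}_\eta(x)\,\nabla_\eta \bar{s}_\eta(x)^\top dx \\
&- 2\int_0^1 \big(\bar{s}^{[N]}(x)-\bar{s}_\eta(x)\big)\,\nabla^2_\eta \bar{s}_\eta(x)\,dx .
\end{align*}
For claim~(1) I would bound $\nabla^2 J$ in operator norm uniformly over $\Xi$: the first integrand is controlled by the uniform bound on $\|\nabla_\eta \bar{s}_\eta(x)\|$ from Assumption~\ref{a:smooth}, while in the second $|\bar{s}^{[N]}(x)-\bar{s}_\eta(x)| \le 2 s_{max}$ because strategies lie in the compact set $\mathcal{S}$ (Assumption~\ref{a:exist_uniq}) and $\|\nabla^2_\eta \bar{s}_\eta(x)\|$ is uniformly bounded, again by Assumption~\ref{a:smooth}. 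A uniform Hessian bound on the convex set $\Xi$ (Assumption~\ref{a:convex_xi}) yields Lipschitz continuity of $\nabla J$ through the mean value inequality along segments, i.e.\ $L$-smoothness; this step is deterministic.

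The crux is claim~(2), and the key idea is that the \emph{global, finite} identifiability inequality \eqref{identifiability} forces an \emph{infinitesimal} coercivity bound on the first Hessian term at $\bar{\eta}$. Taking $\eta = \bar{\eta}+tv$, admissible for small $t>0$ since $\bar{\eta}$ is interior to $\Xi$ (Assumption~\ref{a:convex_xi}), dividing \eqref{identifiability} by $t$ and letting $t\to 0$, the $L^2$-differentiability from Assumption~\ref{a:smooth} gives $\|v\| \le L_{\bar{\eta}}\,\|\nabla_\eta \bar{s}_{\bar{\eta}}(\cdot)^\top v\|_{L^2}$ for every $v\in\mathbb{R}^n$, hence
\begin{align*}
v^\top \Big(\int_0^1 \nabla_\eta \bar{s}_{\bar{\eta}}(x)\,\nabla_\eta \bar{s}_{\bar{\eta}}(x)^\top dx\Big) v
&= \big\|\nabla_\eta \bar{s}_{\bar{\eta}}(\cdot)^\top v\big\|_{L^2}^2 \\
&\ge \tfrac{1}{L_{\bar{\eta}}^2}\|v\|^2 .
\end{align*}
Thus the first term of $\nabla^2 J(\bar{\eta})$ dominates $\tfrac{2}{L_{\bar{\eta}}^2}\mathbb{I}$, whereas the second term is bounded in operator norm by $2M\|\bar{s}^{[N]}-\bar{s}_{\bar{\eta}}\|_{L^2}$ (Cauchy--Schwarz plus the uniform bound $\|\nabla^2_\eta \bar{s}_{\bar{\eta}}(x)\|\le M$), which vanishes almost surely by Proposition~\ref{prop:convergence}. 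Hence, once $\|\bar{s}^{[N]}-\bar{s}_{\bar{\eta}}\|_{L^2} < \tfrac{1}{M L_{\bar{\eta}}^2}$ we get $\nabla^2 J(\bar{\eta})\succ 0$, and continuity of $\nabla^2 J$ (from claim~(1)) propagates positive definiteness to a neighborhood of $\bar{\eta}$, giving local strict convexity. The large-$N$, probability-$1-\delta$ phrasing then follows by turning the almost-sure convergence of Proposition~\ref{prop:convergence} into $\mathbb{P}\big(\sup_{N>\bar{N}}\|\bar{s}^{[N]}-\bar{s}_{\bar{\eta}}\|_{L^2}<\epsilon\big)\ge 1-\delta$ for $\bar{N}$ large.

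Finally, for claim~(3) I would localize every candidate minimizer using identifiability. Any minimizer $\hat{\eta}$ satisfies $\|\bar{s}^{[N]}-\bar{s}_{\hat{\eta}}\|_{L^2}=\sqrt{J(\hat{\eta})}\le\sqrt{J(\bar{\eta})}=\|\bar{s}^{[N]}-\bar{s}_{\bar{\eta}}\|_{L^2}$, so the triangle inequality and \eqref{identifiability} give $\|\hat{\eta}-\bar{\eta}\|\le 2L_{\bar{\eta}}\|\bar{s}^{[N]}-\bar{s}_{\bar{\eta}}\|_{L^2}=:\rho_N$. The same bound shows the sublevel set $\{\eta\in\Xi:J(\eta)\le J(\bar{\eta})\}$ is contained in the closed ball $\bar B(\bar{\eta},\rho_N)$, which lies in the interior of $\Xi$ for $N$ large; continuity of $J$ on this compact ball gives existence of a global minimizer. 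Taking $N$ large enough that $\rho_N$ is below the radius of the strict-convexity neighborhood from claim~(2) confines all global minimizers to a convex region on which $J$ is strictly convex, forcing uniqueness. I expect claim~(2) to be the main obstacle: the delicate part is passing from the finite bound \eqref{identifiability} to the infinitesimal coercivity of the Jacobian term, and then balancing this deterministic lower bound against the random residual term, whose smallness is only guaranteed with high probability for large $N$ via Proposition~\ref{prop:convergence}.
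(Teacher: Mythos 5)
Your proof is correct and shares the paper's overall architecture---decompose the Hessian as $\nabla^2_\eta J(\eta) = 2T_1(\eta) - 2T_2^N(\eta)$ with $T_1(\eta)=\int_0^1 \nabla_\eta\bar{s}_\eta(x)\nabla_\eta\bar{s}_\eta(x)^\top dx$ and $T_2^N(\eta)=\int_0^1(\bar{s}^{[N]}(x)-\bar{s}_\eta(x))\nabla^2_\eta\bar{s}_\eta(x)dx$, make the residual $T_2^N(\bar\eta)$ small via Proposition~\ref{prop:convergence}, and localize all global minimizers with identifiability plus the triangle inequality (your claim-(3) argument is essentially the paper's verbatim)---but it deviates on the two technical steps in an interesting way. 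For claim (1), the paper bounds $\|\nabla_\eta J(\eta)-\nabla_\eta J(\tilde\eta)\|$ directly by Cauchy--Schwarz using the Lipschitz constants $L_1,L_2$ supplied by Assumption~\ref{a:smooth}, while you bound the Hessian uniformly over $\Xi$ and integrate along segments; both work, yours additionally invoking convexity of $\Xi$ (available from Assumption~\ref{a:convex_xi}). The more substantive difference is the key lemma $T_1(\bar\eta)\succ 0$: the paper (Lemma~\ref{lem:T1_eta_bar}) argues by contradiction---if $v^\top\nabla_\eta\bar{s}_{\bar\eta}(x)=0$ almost everywhere, a second-order Taylor expansion gives $\|\bar{s}_{\bar\eta+v}-\bar{s}_{\bar\eta}\|_{L^2}\le\tfrac12 L_2\|v\|^2$, contradicting \eqref{identifiability} once $\|v\|<2/(L_{\bar\eta}L_2)$---which yields positive definiteness but no modulus. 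You instead set $\eta=\bar\eta+tv$, divide \eqref{identifiability} by $t$, and let $t\to0$ (legitimate, since the Taylor remainder is $O(t^2\|v\|^2 L_2)$ uniformly in $x$), obtaining the quantitative coercivity $T_1(\bar\eta)\succeq L_{\bar\eta}^{-2}I$; this is sharper than the paper's lemma and gives an explicit threshold on $\|\bar{s}^{[N]}-\bar{s}_{\bar\eta}\|_{L^2}$ for $\nabla^2 J(\bar\eta)\succ0$. One imprecision to fix: you justify propagating positive definiteness to a neighborhood of $\bar\eta$ by ``continuity of $\nabla^2 J$ (from claim (1))'', but $L$-smoothness of $J$ does not imply continuity of its Hessian, and---more importantly---the neighborhood radius must be chosen uniformly in $N$, since $T_2^N$ depends on the random sample; this is precisely the role of the paper's Lemma~\ref{lem:delta_Ts}, which proves $T_1(\eta)$ is continuous and $T_2^N(\eta)$ is continuous in $\eta$ uniformly in $N$ (using $|\bar{s}^{[N]}(x)|\le s_{max}$ together with the Lipschitz bounds of Assumption~\ref{a:smooth}). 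Your setup has all the ingredients to close this---both terms are Lipschitz in $\eta$ with $N$-independent constants, and your explicit coercivity constant even makes the resulting radius explicit---so this is a misattribution rather than a genuine gap, but it should be stated correctly.
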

\begin{proof}
1) To prove that $J(\eta)$ is $L$-smooth, note that
\begin{align*}
    J(\eta) &= \| \bar{s}^{[N]} - \bar{s}_\eta \|_{L^2}^2
    = \int_0^1 \left ( \bar{s}^{[N]}(x) - \bar{s}_\eta(x) \right )^2 dx,\\
    \nabla_\eta J(\eta) &= - \int_0^1 2 ( \bar{s}^{[N]}(x) - \bar{s}_\eta(x) )  \nabla_\eta  \bar{s}_\eta(x) dx.
\end{align*} 
By Assumption \ref{a:smooth}, there exists $L_1,L_2 > 0$ such that $\|\bar{s}_\eta - \bar{s}_{\tilde{\eta}} \|_{L^2} \le L_1 \|\eta- \tilde{\eta}\|$ and $\|\nabla_\eta \bar{s}_\eta - \nabla_\eta \bar{s}_{\tilde{\eta}} \|_{L^2;\mathbb{R}^n} \le L_2 \|\eta- \tilde{\eta}\|$.
Hence, 
\begingroup
\allowdisplaybreaks
\begin{align*}
    \frac{1}{2} \| &\nabla_\eta  J(\eta) - \nabla_\eta J(\tilde{\eta}) \| \\ 
    &= \left \|  \int_0^1 \left[ (\bar{s}^{[N]}(x)-\bar{s}_{\eta}(x)) \nabla_\eta \bar{s}_{\eta}(x) \right. \right.\\
    &\quad \left. \left. -  (\bar{s}^{[N]}(x)-\bar{s}_{\tilde{\eta}}(x)) \nabla_\eta \bar{s}_{\tilde{\eta}}(x)  \right] dx \right \| \\
    &\le  \left \|  \int_0^1  (\bar{s}^{[N]}(x) -\bar{s}_\eta(x) )(\nabla_\eta \bar{s}_{\eta}(x) - \nabla_\eta \bar{s}_{\tilde{\eta}}(x)) dx \right\| \\
    &\quad  +\left\| \int_0^1 (\bar{s}_{\tilde{\eta}}(x) - \bar{s}_{\eta}(x))  \nabla_\eta \bar{s}_{\tilde{\eta}}(x) dx  \right \| \\
    &\le \| \bar{s}^{[N]} -\bar{s}_\eta \|_{L^2}  \left \| \nabla_\eta \bar{s}_{\eta} - \nabla_\eta \bar{s}_{\tilde{\eta}} \right \|_{L^2;\mathbb{R}^n} \\
    &\quad + \| \bar{s}_{\tilde{\eta}} - \bar{s}_{\eta}\|_{L^2} \left\|  \nabla_\eta \bar{s}_{\tilde{\eta}} \right \|_{L^2;\mathbb{R}^n} \tag*{\{\small By Cauchy-Schwartz\}} \\
    &\le 2 s_{\rm max} L_2 \|\eta - \tilde{\eta}\| + L_1 \| \eta - \tilde{\eta}\| \cdot L_1 \tag*{\{\small By Lemma \ref{lem:bded_grad_1} in Appendix A\}}\\
    &= \left ( 2 L_2 s_{\rm max}  +  L_1^2  \right ) \| \eta - \tilde{\eta}\|
    =: L_{J} \| \eta - \tilde{\eta}\|.
\end{align*}

2) To investigate the local convexity of $J(\eta)$ around $\bar{\eta}$, we compute the Hessian $H(\eta):= \nabla_\eta^2  J(\eta)$ and examine under which conditions it is positive definite 
    \begin{align}\label{eq:H}
    H(\eta) &  := \nabla_\eta^2 J(\eta) = 
     2 \underbrace{\int_0^1  \nabla_\eta \bar{s}_\eta(x) \nabla_\eta \bar{s}_\eta(x) ^T dx}_{=: T_1(\eta)}\\ \notag
    &\qquad \qquad \quad  - 2 \underbrace{\int_0^1 ( \bar{s}^{[N]}(x) - \bar{s}_\eta(x) ) \nabla_\eta^2 \bar{s}_\eta(x) dx}_{=: T_2^N(\eta)}.
\end{align}

 Lemma \ref{lem:T1_eta_bar} in Appendix B shows that $T_1(\bar \eta)$ is positive definite, while Lemma \ref{lem:T2_eta_bar} shows that for all $\delta>0$, $\epsilon>0$, there exists $\bar{N}$ such that for all $N>\bar{N}$, $\|T_2^N(\bar \eta)\|<\epsilon$ with probability $1-\delta$.  
 It follows that for all $\delta>0$, there exists $\bar{N}$ such that for all $N>\bar{N}$, the Hessian $H(\bar{\eta})$ is positive definite with probability $1-\delta$. 
We next show that $H(\eta) > 0$ locally around $\bar{\eta}$.
To this end, note that
\begin{align*}
    \frac{1}{2}H(\eta) = T_1(\bar{\eta}) &+ [T_1(\eta) - T_1(\bar{\eta})]\\ &+ T_2^N(\bar{\eta}) + [T_2^N(\eta) - T_2^N(\bar{\eta})].
\end{align*}
From Lemma \ref{lem:delta_Ts} in Appendix B, the difference terms can be made arbitrary small for $\eta$ close to $\bar{\eta}$, independently of $N$.
Hence, for all $\delta>0$, it follows from Lemmas~\ref{lem:T1_eta_bar}, \ref{lem:T2_eta_bar}, \ref{lem:delta_Ts} that there exists $\mu$ and $\bar{N}$ such that with probability $1-\delta$, the Hessian $H(\eta)$ is positive definite for all $\eta$ satisfying $\|\eta-\bar{\eta}\|<\mu$ and for all $N>\bar{N}$.
This implies that locally around $\bar{\eta}$, $J(\eta)$ is  strictly convex and there is a unique solution to
\begin{align}
    \underset{\eta \in \Xi, \|\eta - \bar{\eta} \| \le \mu}{\textrm{min}} \quad & \| \bar{s}^{[N]} - \bar{s}_\eta \|_{L^2}^2 \label{equiv_optim}.
\end{align}

3) Finally, to prove that (\ref{eta_estimate}) has a unique solution over the entire domain $ \Xi$, let $\tilde{\eta}$ be any generic solution to (\ref{eta_estimate}).
By identifiability, we have
\begin{align*}
    \| \bar{\eta} - \tilde{\eta} \| &\le L_{\bar{\eta}} \| \bar{s}_{\bar{\eta}} - \bar{s}_{\tilde{\eta}} \|_{L^2}\\
    &\le L_{\bar{\eta}} ( \| \bar{s}_{\bar{\eta}} - \bar{s}^{[N]} \|_{L^2} + \| \bar{s}^{[N]}  - \bar{s}_{\tilde{\eta}} \|_{L^2} )\\
    &\le 2 L_{\bar{\eta}} \| \bar{s}_{\bar{\eta}} - \bar{s}^{[N]} \|_{L^2} \tag*{\{By (\ref{eta_estimate})\}}.
\end{align*}

Therefore, by Proposition \ref{prop:convergence}, for a given $\mu$ and $\delta$, there exists $\bar{N}'\ge \bar{N}$ such that for $N>\bar{N}'$, with probability $1-\delta$, any solution $\tilde{\eta}$ to \eqref{eta_estimate} satisfies $\| \bar{\eta} - \tilde{\eta} \| \le \mu$.
In this case, (\ref{eta_estimate}) is equivalent to (\ref{equiv_optim}) and thus has a unique solution.
\endgroup
\end{proof}

The first part of Theorem \ref{thm:uniqueness} is useful as smoothness of the objective function in \eqref{eta_estimate} is a sufficient condition for the convergence to a stationary point of many derivative-free optimization algorithms such as trust region \cite{conn2009global} and finite difference methods \cite{nesterov2017random}.
Additionally, the second and third parts of Theorem \ref{thm:uniqueness} guarantee that if these algorithms start close enough to the optimal solution, they converge to it for large enough $N$ with high probability.
Global convergence and strong convexity remain an open problem.

\section{Identifiability and Smoothness} \label{LQ_games_unconstr}

The results derived above rely on parameter identifiability (Assumption \ref{a:identifiability}) and smoothness  of the equilibrium (Assumption  \ref{a:smooth}).
We next verify that these assumptions hold in games involving both homogeneous and heterogeneous parameters.

To this end, we  focus on linear quadratic (LQ) games in which the payoff $U$ is linear in the network aggregate $z(x)$ and quadratic in the strategy $s(x)$
\begin{align}
    U(s(x),z(x),& \theta_\eta(x)) = - \frac{1}{2} s(x)^2 \nonumber\\
    &\quad + ([\theta_\eta(x)]_1 + [\theta_\eta(x)]_2 z(x)) s(x) \label{eq:lq_payoff}
\end{align}
where the components of the heterogeneity parameter  $\theta_\eta(x)=[[\theta_\eta(x)]_1,[\theta_\eta(x)]_2]^\top  \in \Theta \subseteq  \mathbb{R}^2_+$ denote the standalone marginal return ($[\theta_\eta(x)]_1$) and the local aggregate effect on marginal return ($[\theta_\eta(x)]_2]$), respectively. 

If the game has homogeneous parameter $\theta_\eta(x)\equiv [\eta_1,\eta_2]^\top \in \mathbb{R}^2_+$ for all $x\in[0,1]$, then under Assumption~\ref{a:feas_eta}, the graphon Nash equilibrium $\bar{s}_\eta$ can be explicitly written as a fixed point of the best-response mapping 
\begin{align} \label{lq_nash_eq_proj}
 \bar{s}_\eta(x) = \Pi_{\mathcal{S}} \left [ \eta_{1} \mathds{1}(x) +  \eta_{2} \mathbb{W} \Bar{s}_\eta(x) \right ]
\end{align}
where $\Pi_{\mathcal{S}} [\cdot]$ is the operator for the projection onto the strategy set $\mathcal{S}$.
Since under this projection operation, different parameters $\eta$ could yield the same equilibrium, we introduce an additional assumption to guarantee identifiability. \begin{assumption}[Internal equilibrium] \label{a:internality}
Under Assumption~\ref{a:feas_eta}, for all $\eta \in \Xi$, the equilibrium $\bar{s}_\eta(x)$ is interior (i.e., $\bar{s}_\eta(x) \in \textrm{int}(\mathcal{S})$) for all $x$. 
\end{assumption}
With this additional assumption, the graphon Nash equilibrium in \eqref{lq_nash_eq_proj} simplifies to
\begin{align} \label{lq_nash_eq}
 \bar{s}_\eta(x) = (\mathbb{I} - \eta_{2} \mathbb{W})^{-1} \eta_{1} \mathds{1}(x),
\end{align}
which corresponds to the Bonacich centrality of agent $x$ in the graphon $W$ \cite{avella2018centrality}. 

\subsection{LQ games with unknown homogeneous parameters} \label{sec:LQ_hom_param}

We start our analysis by proving identifiability and smoothness in homogeneous LQ games when both the parameter $\eta_2$ representing the effect of the local aggregate and the standalone marginal return parameter $\eta_1$ are unknown.

\begin{proposition}[Identifiability] \label{lem:homogeneous}
Consider a LQ game with $\Xi \subset \mathbb{R}_+^2$ and payoff function
\begin{equation}
    U(s(x), z(x), \theta_{{\eta}}(x)) = -\frac{1}{2}s^2(x) + ({\eta}_1 + {\eta}_2 z(x)) s(x) \label{lin_game_two_param},
\end{equation}
so that $\theta_{{\eta}}(x)={\eta} = [{\eta}_1,{\eta}_2]^\top$ for all $x\in[0,1]$.
Suppose that Assumptions \ref{a:feas_eta} and \ref{a:internality} hold.
The parameter vector ${\eta} \in \Xi$ is identifiable if and only if agents have heterogeneous effects on the local aggregate at equilibrium (i.e. $ \bar{z}_{{\eta}} \ne \gamma \mathds{1}$ for any $\gamma \in \mathbb{R}$). 
Otherwise, only the sum ${\eta}_1 + \gamma {\eta}_2$ is identifiable.
\end{proposition}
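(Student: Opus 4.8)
The plan is to work directly with the closed-form interior equilibrium \eqref{lq_nash_eq}, $\bar{s}_\eta = \eta_1(\mathbb{I} - \eta_2\mathbb{W})^{-1}\mathds{1}$, and to diagonalize the self-adjoint compact operator $\mathbb{W}$. Writing $\{\lambda_k\}$ for its eigenvalues, $\{\phi_k\}$ for an orthonormal eigenbasis, and $\mathds{1} = \sum_k c_k\phi_k$ with $c_k = \langle\mathds{1},\phi_k\rangle$, feasibility (Assumption~\ref{a:feas_eta}, which forces $\eta_2\lambda_{\rm max}(\mathbb{W}) < 1$ in the LQ case) lets me expand $\bar{s}_\eta = \eta_1\sum_k \frac{c_k}{1-\eta_2\lambda_k}\phi_k$ and $\bar{z}_\eta = \mathbb{W}\bar{s}_\eta = \eta_1\sum_k\frac{\lambda_k c_k}{1-\eta_2\lambda_k}\phi_k$. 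Since $\mu\mapsto \mu/(1-\eta_2\mu)$ is strictly monotone, the first observation I would record is that $\bar{z}_\eta$ is a multiple of $\mathds{1}$ if and only if $\mathds{1}$ is supported on a single eigenspace of $\mathbb{W}$ (equivalently, the degree function $\mathbb{W}\mathds{1}$ is constant); otherwise at least two distinct eigenvalues $\lambda\neq\lambda'$ carry nonzero coefficients $c,c'$.

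For the degenerate (``only if'') direction I would assume $\bar{z}_\eta = \gamma\mathds{1}$. Then the interior fixed-point relation $\bar{s}_\eta = \eta_1\mathds{1} + \eta_2\bar{z}_\eta$ collapses to $\bar{s}_\eta = (\eta_1 + \gamma\eta_2)\mathds{1}$, so the whole equilibrium map factors through the scalar $\eta_1 + \gamma\eta_2$. Since $\Xi$ is two-dimensional with $\bar{\eta}$ in its interior (Assumption~\ref{a:convex_xi}), I can choose $\eta\neq\bar{\eta}$ in $\Xi$ on the line $\eta_1 + \gamma\eta_2 = \bar{\eta}_1 + \gamma\bar{\eta}_2$; this $\eta$ gives $\bar{s}_\eta = \bar{s}_{\bar{\eta}}$, so the right-hand side of \eqref{identifiability} vanishes while $\|\bar{\eta} - \eta\| > 0$, ruling out any finite $L_{\bar{\eta}}$. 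The same computation shows two parameters yield the same equilibrium exactly when they agree on $\eta_1 + \gamma\eta_2$, so that combination is the only identifiable quantity.

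For the main (``if'') direction, assuming two active eigenvalues $\lambda\neq\lambda'$, I would lower-bound the objective by its two-mode contribution,
\[
\|\bar{s}_{\bar{\eta}} - \bar{s}_\eta\|_{L^2}^2 \ge c^2\big(f_\lambda(\bar{\eta})-f_\lambda(\eta)\big)^2 + c'^2\big(f_{\lambda'}(\bar{\eta})-f_{\lambda'}(\eta)\big)^2 \ge \min(c^2,c'^2)\,\|G(\bar{\eta})-G(\eta)\|^2,
\]
where $f_\mu(\eta):=\eta_1/(1-\eta_2\mu)$ and $G(\eta):=(f_\lambda(\eta),f_{\lambda'}(\eta))$. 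It then suffices to show $\|G(\bar{\eta})-G(\eta)\|\ge \kappa\,\|\bar{\eta}-\eta\|$ on $\Xi$. A direct computation gives $\det DG(\eta) = \eta_1(\lambda'-\lambda)/\big[(1-\eta_2\lambda)(1-\eta_2\lambda')\big]^2$, which is nonzero at $\bar{\eta}$ because $\bar{\eta}_1 > 0$ (interior of $\Xi\subset\mathbb{R}^2_+$) and $\lambda\neq\lambda'$; and an elementary elimination shows $G(\eta)=G(\bar{\eta})$ forces $\eta=\bar{\eta}$. With $\Xi$ compact, a standard contradiction argument (split into limit points $\eta^\ast\neq\bar{\eta}$, handled by continuity together with $G(\eta^\ast)\neq G(\bar{\eta})$, and $\eta^\ast=\bar{\eta}$, handled by the $C^1$ expansion and invertibility of $DG(\bar{\eta})$) yields the one-sided bound, hence \eqref{identifiability} with $L_{\bar{\eta}} = \big(\kappa\min(|c|,|c'|)\big)^{-1}$.

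The main obstacle, and the step I would treat most carefully, is upgrading the local, Jacobian-level nondegeneracy into the global Lipschitz lower bound of \eqref{identifiability}: a mean-value identity only produces the averaged Jacobian $\int_0^1 DG\,dt$, which need not be invertible even when each $DG$ is, so I would instead rely on the compactness/contradiction argument above (injectivity from $\bar{\eta}$ plus a nonsingular Jacobian at $\bar{\eta}$). A minor but worthwhile check is that only $\bar{\eta}_1 > 0$ is needed, so the bound survives even if $\Xi$ meets the axis $\eta_1 = 0$ where $G$ degenerates.
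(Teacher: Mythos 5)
Your spectral route is genuinely different from the paper's, and its core computations check out: Parseval does give the two-mode lower bound, your formula $\det DG(\eta)=\eta_1(\lambda'-\lambda)/[(1-\eta_2\lambda)(1-\eta_2\lambda')]^2$ is correct, and in the non-degenerate case $\bar{\eta}_1>0$ indeed comes for free (if $\bar{\eta}_1=0$ then $\bar{z}_{\bar{\eta}}=0\cdot\mathds{1}$). The genuine gap is in the local-to-global step: your contradiction argument \emph{assumes $\Xi$ is compact}, but Proposition~\ref{lem:homogeneous} does not grant this (compactness is only added later, in the smoothness result, Proposition~\ref{lem:smooth_lq_hom}). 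Under the stated hypotheses $\Xi$ may be unbounded in $\eta_1$ and need not be closed, so a minimizing sequence for the ratio $\|G(\bar{\eta})-G(\eta)\|/\|\bar{\eta}-\eta\|$ need not have a limit point in $\Xi$, and your case split (limit $\eta^\ast\neq\bar{\eta}$ versus $\eta^\ast=\bar{\eta}$) collapses. The gap is repairable (e.g., when $\eta_1\to\infty$ the ratio stays bounded below since $f_\lambda(\eta)$ grows linearly in $\eta_1$, and closure points can be handled by extending the elimination argument), but as written you prove a weaker statement. The paper sidesteps all of this with a linear identity: subtracting the two fixed-point equations gives $(\eta_1-\bar{\eta}_1)\mathds{1}+(\eta_2-\bar{\eta}_2)\mathbb{W}\bar{s}_{\bar{\eta}}=(\mathbb{I}-\eta_2\mathbb{W})(\bar{s}_\eta-\bar{s}_{\bar{\eta}})$, whose left-hand side is \emph{linear} in $\eta-\bar{\eta}$; decomposing $\bar{z}_{\bar{\eta}}$ into components parallel and orthogonal to $\mathds{1}$ and bounding the right-hand side by $2\|\bar{s}_\eta-\bar{s}_{\bar{\eta}}\|_{L^2}$ then yields an explicit constant $L_{\bar{\eta}}=2(\lambda_m\bar{\nu})^{-1/2}$ valid for \emph{every} $\eta\in\Xi$, with no compactness and no local analysis.

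In the degenerate direction there is a second problem. When $\mathbb{W}\mathds{1}=\lambda\mathds{1}$, the multiple in $\bar{z}_\eta=\frac{\eta_1\lambda}{1-\eta_2\lambda}\mathds{1}$ \emph{depends on $\eta$}, so the equilibrium map factors through $g(\eta):=\eta_1/(1-\eta_2\lambda)$, whose level sets form a pencil of lines through $(0,1/\lambda)$ --- not through the fixed affine functional $\eta_1+\gamma\eta_2$. Your claim that ``two parameters yield the same equilibrium exactly when they agree on $\eta_1+\gamma\eta_2$'' is therefore false: with $W\equiv 1$ (so $\lambda=1$) and $\bar{\eta}=(1/2,1/2)$ (so $\gamma=1$), the parameters $(0.8,0.1)$ and $(0.7,0.2)$ agree on the sum but give $g=8/9\neq 7/8$, hence different equilibria. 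Only the single level line through $\bar{\eta}$ coincides with $\{\eta_1+\gamma\eta_2=\bar{\eta}_1+\gamma\bar{\eta}_2\}$, which is what your non-identifiability counterexample actually uses (and that construction additionally requires $\bar{\eta}\in\mathrm{int}(\Xi)$, i.e.\ Assumption~\ref{a:convex_xi}, which is not among the proposition's hypotheses; the paper instead exhibits a specific constant-graphon game). More importantly, the \emph{positive} half of the claim --- that $\eta_1+\gamma\eta_2$ is identifiable in the quantitative Lipschitz sense of \eqref{identifiability} --- is never established; a qualitative ``same equilibrium iff same sum'' statement, even if true, yields no constant. The fix is one line: $(\eta_1+\gamma\eta_2)-(\bar{\eta}_1+\gamma\bar{\eta}_2)=(1-\lambda\eta_2)\big(g(\eta)-g(\bar{\eta})\big)$, while $\|\bar{s}_\eta-\bar{s}_{\bar{\eta}}\|_{L^2}=|g(\eta)-g(\bar{\eta})|$ and $0<1-\lambda\eta_2\le 1$, so the sum is identifiable with constant $1$; the paper obtains the same conclusion with constant $2$ directly from its upper bound.
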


\begin{proposition}[Smoothness] \label{lem:smooth_lq_hom}
Consider a homogeneous LQ game satisfying all the assumptions in Proposition \ref{lem:homogeneous} and further assume that $\Xi$ is compact and $\eta_2 \|\mathbb{W}\|_\infty <1$ for all $\eta \in \Xi$. 
The corresponding graphon equilibrium satisfies Assumption \ref{a:smooth}.
\end{proposition}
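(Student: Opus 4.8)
The plan is to work directly with the closed-form equilibrium. Under Assumptions \ref{a:feas_eta} and \ref{a:internality}, the homogeneous LQ equilibrium is given explicitly by \eqref{lq_nash_eq}, namely $\bar{s}_\eta = \eta_1 (\mathbb{I}-\eta_2\mathbb{W})^{-1}\mathds{1}$. Writing $M(\eta_2):=(\mathbb{I}-\eta_2\mathbb{W})^{-1}$, the strategy is to (i) establish uniform operator bounds on $M$ over the compact set $\Xi$, (ii) differentiate the closed form twice in $\eta$ using the standard operator-inverse identity, and (iii) control the resulting differences via the resolvent identity to obtain Lipschitz continuity of the second derivatives, all in the sup norm so that the estimates are automatically uniform in $x$.

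First I would exploit compactness. Since $\eta\mapsto \eta_2\|\mathbb{W}\|_\infty$ is continuous and strictly below $1$ on the compact set $\Xi$, there is $\rho:=\sup_{\eta\in\Xi}\eta_2\|\mathbb{W}\|_\infty<1$, so the Neumann series for $M(\eta_2)$ converges and $\|M(\eta_2)\|_\infty \le (1-\rho)^{-1}$ uniformly over $\Xi$; compactness also bounds $\eta_1\le C_1$. Because $\|\mathds{1}\|_\infty=1$, every quantity obtained by applying a bounded operator to $\mathds{1}$ is a function bounded in sup norm, which is exactly what is needed for the uniform-in-$x$ conclusion.

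Next I would differentiate. Using $\partial_{\eta_2}(\mathbb{I}-\eta_2\mathbb{W})^{-1}=M\mathbb{W}M$, the first derivatives are $\partial_{\eta_1}\bar s_\eta=M\mathds{1}$ and $\partial_{\eta_2}\bar s_\eta=\eta_1 M\mathbb{W}M\mathds{1}$, and the second derivatives are $\partial^2_{\eta_1}\bar s_\eta=0$, $\partial_{\eta_1}\partial_{\eta_2}\bar s_\eta=M\mathbb{W}M\mathds{1}$ and $\partial^2_{\eta_2}\bar s_\eta=2\eta_1 M\mathbb{W}M\mathbb{W}M\mathds{1}$. Each is a fixed product of the uniformly bounded operators $M$ and $\mathbb{W}$ applied to $\mathds{1}$ (times a bounded scalar), hence bounded in sup norm uniformly over $\Xi$ and $x$; this already gives that $\bar s_\eta$ is $C^2$ in $\eta$.

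The main work lies in establishing Lipschitz continuity of these second derivatives. The key tool is the resolvent identity $M(\eta_2)-M(\tilde\eta_2)=(\eta_2-\tilde\eta_2)\,M(\eta_2)\mathbb{W}M(\tilde\eta_2)$, which yields $\|M(\eta_2)-M(\tilde\eta_2)\|_\infty\le \|\mathbb{W}\|_\infty(1-\rho)^{-2}|\eta_2-\tilde\eta_2|$. Inserting this bound into each second-derivative expression, telescoping over the factors of $M$ and using the linear (hence Lipschitz) dependence on $\eta_1$, gives $\|\partial^2 \bar s_\eta-\partial^2\bar s_{\tilde\eta}\|_\infty \le L\|\eta-\tilde\eta\|$ for a constant $L$ depending only on $\rho$, $C_1$ and $\|\mathbb{W}\|_\infty$; equivalently, one may simply bound the third derivatives (again products of $M$ and $\mathbb{W}$) uniformly and invoke the mean-value inequality. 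Since all bounds are in sup norm they hold pointwise uniformly in $x$, which verifies Assumption \ref{a:smooth}. The only real obstacle is the bookkeeping in telescoping the operator products when bounding the second-derivative differences, but the resolvent identity together with the uniform bound $\|M\|_\infty\le (1-\rho)^{-1}$ makes each term routine.
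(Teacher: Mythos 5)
Your proof is correct, but it takes a genuinely different route from the paper's. The paper expands the equilibrium as a Neumann \emph{series}, $\bar s_\eta(x) = \eta_1 \sum_{k\ge 0}\eta_2^k (\mathbb{W}^k\mathds{1})(x) = \eta_1 f_0(\eta_2,x)$, computes the partial derivatives term by term as the series $f_1, f_2$, and then invokes a dedicated series lemma (Lemma~\ref{lem:series}: comparison with the differentiated geometric series under $\eta_2^{\rm max}\|\mathbb{W}\|_\infty < 1$, plus the factorization of $x^{p+1}-y^{p+1}$) to get uniform boundedness and Lipschitz continuity in $\eta$ uniformly in $x$. You instead keep the resolvent $M(\eta_2)=(\mathbb{I}-\eta_2\mathbb{W})^{-1}$ as a single object, differentiate via $\partial_{\eta_2}M = M\mathbb{W}M$, and control all differences through the resolvent identity $M(\eta_2)-M(\tilde\eta_2)=(\eta_2-\tilde\eta_2)M(\eta_2)\mathbb{W}M(\tilde\eta_2)$ together with the uniform bound $\|M\|_\infty\le(1-\rho)^{-1}$; your derivative formulas ($M\mathds{1}$, $\eta_1 M\mathbb{W}M\mathds{1}$, $2\eta_1 M\mathbb{W}M\mathbb{W}M\mathds{1}$, etc.) are exactly the closed forms of the paper's $f_0,f_1,f_2$, so the two computations agree. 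What each buys: the paper's argument is more elementary (only scalar series manipulations, no operator-norm differentiability) and yields explicit constants $B_h, L_h$; yours is more compact, makes uniformity in $x$ automatic by working in the sup-norm Banach algebra, and generalizes immediately — indeed it is essentially the same resolvent-calculus strategy the paper itself uses for the heterogeneous SBM case in Proposition~\ref{lem:smooth_lq_sbm} (there via $\partial_\eta \mathbb{V}_\eta^{-1} = -\mathbb{V}_\eta^{-1}(\partial_\eta\mathbb{V}_\eta)\mathbb{V}_\eta^{-1}$ and third-derivative bounds), so your proof unifies the homogeneous and heterogeneous cases under one technique. Your fallback of bounding third derivatives and applying the mean-value inequality is likewise sound and mirrors the paper's Lemma~\ref{lem:bded_grad_2} route.
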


To prove smoothness of the equilibrium in Proposition~\ref{lem:smooth_lq_hom}, we use the assumption that $\eta_2 \| \mathbb{W}\|_\infty < 1$. 
This is a stronger assumption than what is needed for existence and uniqueness of the equilibrium since $\lambda_{\textup{max}}(\mathbb{W}) < \| \mathbb{W}\|_\infty$ for symmetric graphons. The assumption on $\|\mathbb{W}\|_\infty$ (which has an interpretation in terms of max degree of the agents in the graphon)  is required to guarantee  Lipschitz continuity of the equilibrium with respect to the parameter $\eta$, point-wise in $x$. 

\subsection{LQ games with heterogeneous parameters}

The next example generalizes the setting of Section~\ref{sec:LQ_hom_param} by considering parameters which can be heterogeneous across communities of agents.
Specifically, we consider a setting in which agents are partitioned into $K$ communities (with probability $\pi_k$ such that $\sum_{k=1}^K \pi_k = 1$) and connect with a probability depending on the community they belong to.
This random network model (which is essentially a stochastic block model (SBM)) can be captured with a graphon by partitioning $[0,1]$ into $K$ disjoint intervals $\{\mathcal{C}_k\}_{k=1}^K$, each of length $ \pi_k$, and by defining a piecewise constant graphon $W_{\rm SBM}$ as 
\begin{align}
    W_{\rm SBM}(x,y) = Q_{ij} \qquad &\textrm{for all } x\in \mathcal{C}_i,\  y \in \mathcal{C}_j \label{sbm_W}
\end{align}
where $Q_{ij}$ is the probability that an agent from community $i$ interacts with an agent from community $j$ and the corresponding matrix $Q\in\mathbb{R}^{K\times K}$ is symmetric.

We assume that the local aggregate effect is the same for each agent belonging to the same community but is a priori unknown.
In other words, the payoff for each agent $x\in \mathcal{C}_k $ in community $k$ is 
\begin{equation}
    U(s(x),z(x), \theta_{\eta}(x)) = -\frac{1}{2}s^2(x) + (\theta_1 + \eta_k z(x)) s(x) \label{lin_game_SBM}
\end{equation}
where  $\theta_{\eta}(x)=\eta_k$ for each $x\in \mathcal{C}_k$. For simplicity, we assume that the standalone marginal return $ \theta_1 >0 $ is known and homogeneous across agents. 
The parameter to identify is then $\eta= [\eta_1, \hdots ,\eta_K]^\top \in \mathbb{R}^K_+$.

\begin{proposition}[Identifiability] \label{lem:sbm_id}
Consider a LQ game with SBM graphon as defined in \eqref{sbm_W} and payoff as defined in (\ref{lin_game_SBM}) with $\theta_1 >0$.
Let $\Xi \subset \mathbb{R}_+^K$ and suppose Assumptions~\ref{a:feas_eta} and \ref{a:internality} hold.
If $\mathcal{S}\subseteq \mathbb{R}_+$, then the parameter $\eta \in \Xi$ is identifiable. 
\end{proposition}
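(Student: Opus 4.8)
The plan is to exploit the piecewise-constant structure of the stochastic block model to reduce identifiability to a finite-dimensional inverse-Lipschitz estimate. First I would observe that, since both $W_{\rm SBM}$ and $\theta_\eta$ are constant on each block $\mathcal{C}_k$, the unique interior equilibrium $\bar{s}_\eta$ must itself be piecewise constant; writing $\bar{s}_\eta(x)=s^\eta_k$ and $\bar{z}_\eta(x)=z^\eta_k$ for $x\in\mathcal{C}_k$, the aggregate reduces to $z^\eta_k=\sum_{j=1}^K Q_{kj}\pi_j s^\eta_j$. Under Assumption~\ref{a:internality} the projection is inactive, so the interior best response gives the finite system $s^\eta_k=\theta_1+\eta_k z^\eta_k$ for each $k$. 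Since $\|\bar{s}_{\bar\eta}-\bar{s}_\eta\|_{L^2}^2=\sum_{k=1}^K \pi_k (s^{\bar\eta}_k-s^\eta_k)^2$, it then suffices to bound $\|\bar\eta-\eta\|$ by the weighted Euclidean distance between the equilibrium vectors.

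The key step is to invert the relation $s^\eta_k=\theta_1+\eta_k z^\eta_k$ to recover $\eta_k=(s^\eta_k-\theta_1)/z^\eta_k$, which is well-defined and smooth as long as $z^\eta_k$ is bounded away from zero. This is precisely where $\mathcal{S}\subseteq\mathbb{R}_+$ enters: combined with $\eta\ge 0$ and $Q\ge 0$, nonnegativity of the strategies forces $z^\eta_k\ge 0$ and hence $s^\eta_k=\theta_1+\eta_k z^\eta_k\ge\theta_1>0$ for every block; feeding this back yields $z^\eta_k=\sum_j Q_{kj}\pi_j s^\eta_j\ge \theta_1\sum_j Q_{kj}\pi_j$, a strictly positive lower bound uniform over $\eta\in\Xi$ (assuming each community has some interaction, i.e.\ $\sum_j Q_{kj}\pi_j>0$).

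With these bounds in hand I would control the inverse map by writing the difference of the recovered parameters as a difference of ratios, $\eta^{\bar\eta}_k-\eta^\eta_k = \frac{s^{\bar\eta}_k-\theta_1}{z^{\bar\eta}_k}-\frac{s^\eta_k-\theta_1}{z^\eta_k}$, and expanding via the elementary identity $\frac{a}{b}-\frac{c}{d}=\frac{a-c}{b}-\frac{c(b-d)}{bd}$. Here $|s^{\bar\eta}_k-s^\eta_k|$ and $|z^{\bar\eta}_k-z^\eta_k|\le\sum_j Q_{kj}\pi_j|s^{\bar\eta}_j-s^\eta_j|$ are both linear in the equilibrium difference, the denominators are bounded below by the previous step, and $|s^\eta_k-\theta_1|$ is bounded above by compactness of $\mathcal{S}$. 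This gives $|\eta^{\bar\eta}_k-\eta^\eta_k|\le C_k\|s^{\bar\eta}-s^\eta\|$ with constants uniform over $\Xi$, hence $\|\bar\eta-\eta\|\le C\|s^{\bar\eta}-s^\eta\|$. Finally, relating the Euclidean and $L^2$ distances through $\|\bar{s}_{\bar\eta}-\bar{s}_\eta\|_{L^2}^2\ge(\min_k\pi_k)\|s^{\bar\eta}-s^\eta\|^2$ produces the identifiability inequality \eqref{identifiability} with $L_{\bar\eta}=C/\sqrt{\min_k\pi_k}$.

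The main obstacle is the uniform lower bound on the aggregates $z^\eta_k$: the inversion $\eta_k=(s^\eta_k-\theta_1)/z^\eta_k$ degenerates if any block has a vanishing aggregate, in which case $s^\eta_k\equiv\theta_1$ carries no information about $\eta_k$ and identifiability fails. The restriction $\mathcal{S}\subseteq\mathbb{R}_+$ together with $\theta_1>0$ is exactly what rules this out for non-isolated communities, while the remaining ratio and norm-equivalence estimates are routine.
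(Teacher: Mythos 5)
Your proof is correct and, at its core, algebraically equivalent to the paper's, but the execution of the key estimate is genuinely different. Both arguments share the same skeleton: reduce to the finite block system $s^\eta_k=\theta_1+\eta_k z^\eta_k$, $z^\eta_k=\sum_j Q_{kj}\pi_j s^\eta_j$ (the paper packages this as Lemma~\ref{lem:sbm}, $\bar{\bar{s}}_\eta=\theta_1(I-\Delta_\eta Q\Delta_\pi)^{-1}\mathds{1}$), use $\mathcal{S}\subseteq\mathbb{R}_+$, $\eta\ge 0$, $\theta_1>0$ to force $s^\eta_k\ge\theta_1$ and hence $z^\eta_k\ge\theta_1\sum_j Q_{kj}\pi_j$, and close with the identical $L^2$-to-Euclidean comparison $\|\bar{s}_\eta-\bar{s}_{\bar\eta}\|_{L^2}^2\ge\min_k(\pi_k)\,\|\bar{\bar{s}}_\eta-\bar{\bar{s}}_{\bar\eta}\|^2$. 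Where you diverge is the middle step: the paper subtracts the two fixed-point equations to obtain $(\mathbb{I}-\Delta_\eta Q\Delta_\pi)(\bar{\bar{s}}_\eta-\bar{\bar{s}}_{\bar\eta})=(\Delta_\eta-\Delta_{\bar\eta})Q\Delta_\pi\bar{\bar{s}}_{\bar\eta}$ and upper-bounds the right-hand side in spectral norm by $2\|\bar{\bar{s}}_\eta-\bar{\bar{s}}_{\bar\eta}\|$, which requires the eigenvalue condition $\max_k(\eta_k)\lambda_{\rm max}(Q\Delta_\pi)<1$ from Assumption~\ref{a:feas_eta} together with \cite[Lemma 10]{parise2019graphon}; you instead invert the scalar best response, $\eta_k=(s^\eta_k-\theta_1)/z^\eta_k$, and run a difference-of-ratios estimate whose ingredients are compactness of $\mathcal{S}$ and entrywise bounds on $Q\Delta_\pi$. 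After clearing denominators, your ratio identity is exactly the paper's componentwise identity, so neither proof is "stronger"; what yours buys is that it avoids the operator-norm/eigenvalue machinery entirely (trading it for $s_{\rm max}<\infty$), at the cost of a messier constant, while the paper's matrix form is more compact and reuses existing lemmas. One genuine merit of your writeup: you state explicitly that identifiability requires every community to be non-isolated, i.e.\ $\sum_j Q_{kj}\pi_j>0$ for all $k$. The paper's proof needs this too---its claim that $[\bar{\bar{z}}_{\bar\eta}]_i>0$ fails for an isolated community, in which case that community's $\eta_i$ cannot be identified since the equilibrium does not depend on it---but leaves the requirement implicit, so flagging it is a small but real improvement rather than a discrepancy.
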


\begin{proposition}[Smoothness] \label{lem:smooth_lq_sbm}
Consider a heterogeneous LQ game satisfying all the assumptions in Proposition \ref{lem:sbm_id} and further assume that $\Xi$ is compact.
Then, the corresponding graphon equilibrium satisfies Assumption \ref{a:smooth}.
\end{proposition}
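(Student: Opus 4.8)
The plan is to exploit the piecewise-constant structure of the stochastic block model to collapse the infinite-dimensional fixed-point equation defining $\bar s_\eta$ into a finite-dimensional linear system in $\mathbb{R}^K$, and then to read off all the required regularity from the smoothness of matrix inversion.

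First I would perform the dimensionality reduction. Since the equilibrium is interior (Assumption \ref{a:internality}), the projection in \eqref{lq_nash_eq_proj} is inactive and the first-order conditions give, for $x\in\mathcal{C}_k$, the identity $\bar s_\eta(x) = \theta_1 + \eta_k (\mathbb{W}\bar s_\eta)(x)$. Writing $M_\eta$ for the operator that multiplies a function by $\eta_k$ on $\mathcal{C}_k$, the best-response map is $T_\eta f := \theta_1\mathds{1} + M_\eta\mathbb{W} f$. I would show that $T_\eta$ leaves invariant the closed subspace of functions that are constant on each community: for the SBM graphon \eqref{sbm_W}, $(\mathbb{W} f)(x)=\sum_{j}Q_{kj}\int_{\mathcal{C}_j} f$ is community-wise constant whenever $f$ is, and $M_\eta$ clearly preserves this property. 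Because Assumption \ref{a:feas_eta} makes $T_\eta$ a contraction with a unique fixed point, that fixed point must lie in this invariant subspace, so $\bar s_\eta(x)=s_k(\eta)$ for $x\in\mathcal{C}_k$ for some vector $s(\eta)=[s_1(\eta),\dots,s_K(\eta)]^\top\in\mathbb{R}^K$.

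Next I would set up and solve the finite system. Representing community-wise constant functions by their values, $\mathbb{W}$ acts as the matrix $Q\Lambda$ with $\Lambda:=\mathrm{diag}(\pi_1,\dots,\pi_K)$ and $M_\eta$ as $D(\eta):=\mathrm{diag}(\eta_1,\dots,\eta_K)$, so the fixed-point equation becomes $A(\eta)\,s(\eta)=\theta_1\mathds{1}$, where $A(\eta):=I-D(\eta)\,Q\,\Lambda$. Existence and uniqueness of the graphon equilibrium for every $\eta\in\Xi$ (Assumption \ref{a:feas_eta}) forces this system to have a unique solution, which is equivalent to invertibility of $A(\eta)$ on $\Xi$; hence $s(\eta)=\theta_1\,A(\eta)^{-1}\mathds{1}$. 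Since $\eta\mapsto D(\eta)$ is linear and $A\mapsto A^{-1}$ is $C^\infty$ (indeed analytic) on the open set of invertible matrices, the composition $\eta\mapsto s(\eta)$ is $C^\infty$ wherever $A(\eta)$ is invertible, with derivatives computed through $\partial A^{-1}=-A^{-1}(\partial A)A^{-1}$ and its higher-order analogues. Because $A$ is invertible on the whole compact set $\Xi$, the norm $\|A(\eta)^{-1}\|$ is bounded there, so $s(\eta)$ and all of its derivatives up to third order are continuous and uniformly bounded on $\Xi$. On the convex set $\Xi$ (Assumption \ref{a:convex_xi}, or a convex neighborhood of $\bar\eta$ when only the local results of Theorem~\ref{thm:uniqueness} are needed), boundedness of $\nabla^3_\eta s$ yields Lipschitz continuity of $\nabla^2_\eta s$ by the mean value theorem, so $s(\eta)$ is twice Lipschitz continuously differentiable. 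Finally, since $\bar s_\eta(x)$ equals one of the finitely many entries $s_k(\eta)$ for each $x$, the relevant Lipschitz constants for $\bar s_\eta(\cdot)$, $\nabla_\eta\bar s_\eta(\cdot)$ and $\nabla^2_\eta\bar s_\eta(\cdot)$ may be taken as the maximum of the $K$ corresponding constants, which is finite and independent of $x$, delivering the uniformity in $x$ demanded by Assumption \ref{a:smooth}.

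The hard part will be the first step: rigorously justifying the reduction, namely that the unique $L^2$ equilibrium is necessarily community-wise constant, since everything afterward is the routine calculus of smooth matrix inversion on a compact set. A secondary point requiring care is the uniform invertibility of $A(\eta)$ over $\Xi$; this does not follow merely from the contraction bound on $\mathbb{W}$, but is guaranteed by Assumption \ref{a:feas_eta} together with compactness of $\Xi$, which keeps $\|A(\eta)^{-1}\|$ bounded and hence makes every Lipschitz constant finite.
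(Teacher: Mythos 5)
Your proposal follows essentially the same route as the paper: reduce the SBM graphon equilibrium to the $K$-dimensional system $\bar{\bar{s}}_\eta = \theta_1(I-\Delta_\eta Q\Delta_\pi)^{-1}\mathds{1}$, exploit smoothness of matrix inversion together with a uniform bound on the inverse over the compact set $\Xi$ to bound derivatives up to third order, and transfer Lipschitz continuity of the equilibrium and its first two derivatives back to $\bar s_\eta$ uniformly in $x$ through the piecewise-constant structure. One difference is to your credit: where the paper invokes Lemma~\ref{lem:sbm} (which in turn rests on a citation to \cite{avella2018centrality}) for community-wise constancy of the equilibrium, you prove it directly with an invariant-subspace argument for the contraction best-response map, which is a clean and self-contained justification of the reduction.

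The weak point is your treatment of invertibility of $A(\eta)=I-\Delta_\eta Q\Delta_\pi$, and your closing remark gets the logic backwards. You claim invertibility ``does not follow merely from the contraction bound on $\mathbb{W}$'' but is instead ``guaranteed by Assumption~\ref{a:feas_eta} together with compactness of $\Xi$.'' In fact the contraction condition inside Assumption~\ref{a:feas_eta} is exactly what the paper uses: $\|\Delta_\eta Q\Delta_\pi\|\le \max_k(\eta_k)\,\lambda_{\rm max}(Q\Delta_\pi)=\max_k(\eta_k)\,\lambda_{\rm max}(\mathbb{W})<1$ (using symmetry of $Q$, via \cite[Lemma~10]{parise2019graphon}), whence $\|A(\eta)^{-1}\|\le (1-\|\Delta_\eta Q\Delta_\pi\|)^{-1}$ by the Neumann series; compactness of $\Xi$ only serves to make this bound uniform. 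By contrast, your primary argument---that existence and uniqueness of the graphon equilibrium ``forces this system to have a unique solution''---has a gap: solutions of $A(\eta)s=\theta_1\mathds{1}$ are fixed points of the \emph{unprojected} best response, and solutions lying outside $\mathcal{S}^K$ are not equilibria, so uniqueness of the equilibrium does not by itself exclude a nontrivial null space of $A(\eta)$. The step can be repaired using Assumption~\ref{a:internality}: since the equilibrium is interior, a null direction $v$ would yield a segment $s(\eta)+tv\in\mathcal{S}^K$ of distinct equilibria for small $|t|$, contradicting uniqueness. Either repair (the spectral bound, as in the paper, or this interiority argument) closes the gap; as written, however, the invertibility step is asserted rather than proved, and the mechanism you explicitly disavow is the one that actually works.
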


\section{Simulations} \label{sec:sims} 
In this section, we provide numerical simulations demonstrating the convergence of the estimator proposed in \eqref{eta_estimate} to the true parameter for the LQ game defined in Proposition~\ref{lem:sbm_id}.
The simulation considers the LQ game with $\bar{\eta}=[0.8,0.6,1,0.8]^\top$, a SBM graphon with 
$$Q = \begin{bmatrix}
    0.9& 0.05& 0& 0\\
    0.05& 0.2& 0.05& 0\\
    0& 0.05& 0.2& 0.05\\
    0& 0& 0.05& 0.8
\end{bmatrix}$$
and equally sized communities ($\pi_k=0.25$). 
Figure \ref{fig:het} illustrates convergence of the estimator $\hat{\eta}$ to $\bar{\eta}$ for large enough $N$ for this example.
The estimation problem \eqref{eta_estimate} was solved using MATLAB's fmincon solver (which uses an interior-point method algorithm).

\begin{figure}
    \centering
    \includegraphics[width=\columnwidth,trim=2.2cm 0 2.2cm 0,clip]{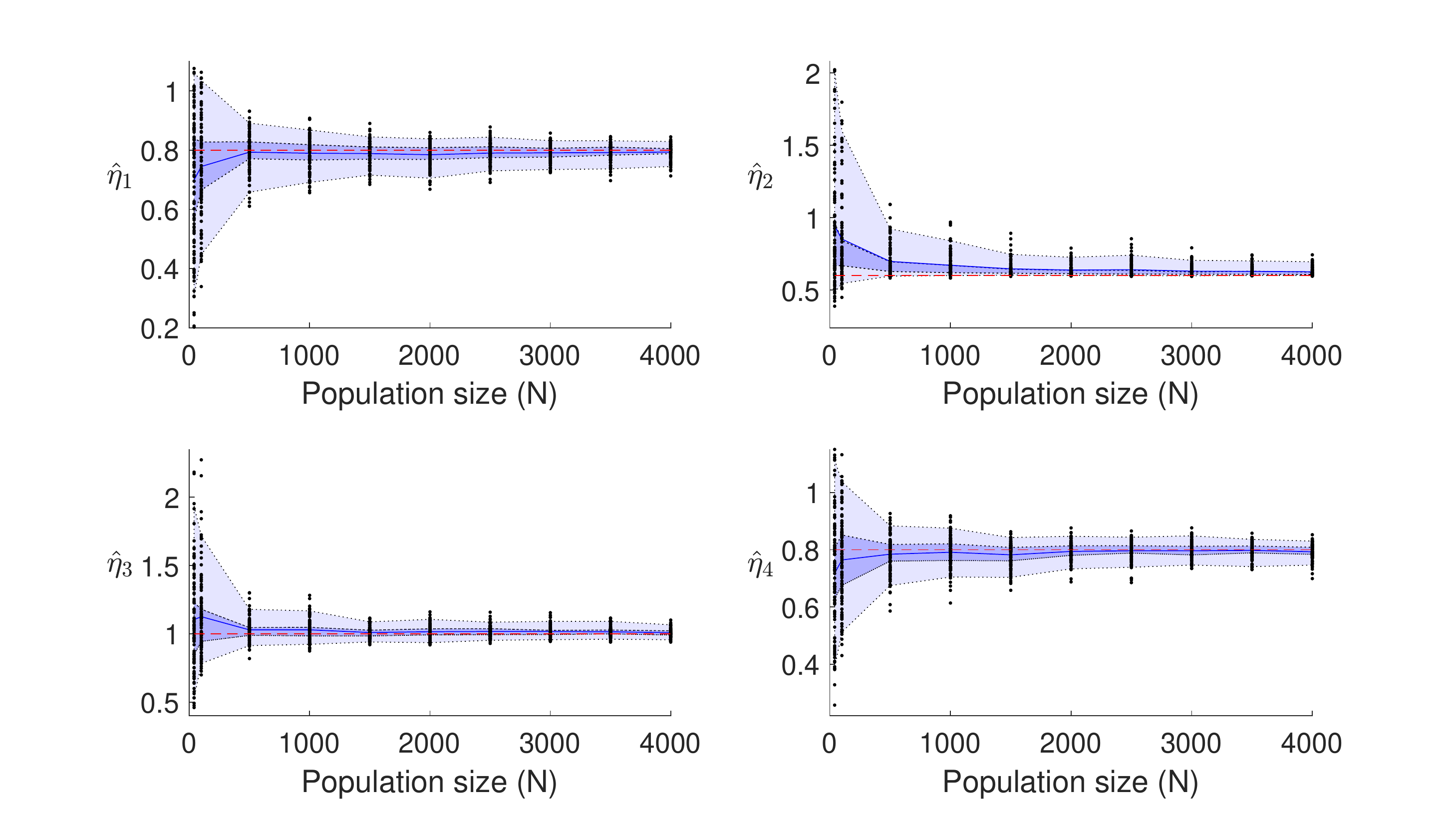}
    \caption{Convergence of estimator. The red dashed line is the true parameter value and quantiles of the estimator values for 100 experiments per each $N$ are shown in blue.}
    \label{fig:het}
\end{figure}

\section{Conclusions} \label{conclusion}

In this work, we introduced a method for estimation of  payoff parameters in large network games by leveraging the framework of graphon games. We discussed properties of the corresponding optimization problem and proved parameter identifiability for linear quadratic games with both homogeneous and heterogeneous parameters. Identifiability of parameters for  games with nonlinear dependence on the network aggregate such as the \textit{quadratic quadratic game} presented in \cite{parise2019graphon} is a future research direction. 


\section*{APPENDIX}
\begingroup
\allowdisplaybreaks
\subsection{Auxiliary lemmas}

Lemmas  \ref{lem:bded_grad_1} and \ref{lem:bded_grad_2} are standard results. 

\begin{lemma} \label{lem:bded_grad_1}
Assumption \ref{a:smooth} holds if and only if there exists $L_1, L_2>0$ such that
\begin{align*}
    \left | \frac{\partial \bar{s}_\eta(x)}{\partial \eta_i} \right| &\le \| \nabla_\eta \bar{s}_\eta (x) \| \le L_1 \quad \textrm{for all } x \in[0,1]\textrm{ and}\\
    \left | \frac{\partial^2 \bar{s}_\eta(x)}{\partial \eta_i \partial \eta_j} \right| &\le \| \nabla_\eta^2 \bar{s}_\eta (x) \| \le L_2 \quad \textrm{for all } x\in[0,1].
\end{align*}
\end{lemma}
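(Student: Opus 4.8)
The plan is to prove the two implications separately, in each reducing to the elementary fact that, on a convex domain, a differentiable map is Lipschitz with constant $L$ if and only if its derivative is bounded in norm by $L$. I would apply this fact twice: once to the scalar map $\eta \mapsto \bar{s}_\eta(x)$ and once to the vector-valued map $\eta \mapsto \nabla_\eta \bar{s}_\eta(x)$, throughout keeping every constant independent of $x$. The convexity of $\Xi$ (Assumption \ref{a:convex_xi}) is what guarantees that segments between parameter values remain admissible, which the integral representations below require.

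For the forward implication I would start from the hypothesis that $\bar{s}_\eta(x)$ is twice Lipschitz continuously differentiable in $\eta$ uniformly in $x$, i.e. it is $C^2$ in $\eta$ and both $\bar{s}_\eta(x)$ and $\nabla_\eta \bar{s}_\eta(x)$ are Lipschitz in $\eta$ with constants $L_1$ and $L_2$ not depending on $x$. Fixing $x$ and a unit vector $v$, the difference-quotient bound $|\bar{s}_{\eta + h v}(x) - \bar{s}_\eta(x)| \le L_1 |h|$ passes to the limit to give $|\nabla_\eta \bar{s}_\eta(x)^\top v| \le L_1$, and aligning $v$ with the gradient yields $\|\nabla_\eta \bar{s}_\eta(x)\| \le L_1$ for every $x$. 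The same argument applied to $\eta \mapsto \nabla_\eta \bar{s}_\eta(x)$, whose Jacobian is $\nabla_\eta^2 \bar{s}_\eta(x)$, gives $\|\nabla_\eta^2 \bar{s}_\eta(x)\| \le L_2$ for every $x$. The componentwise bounds $|\partial \bar{s}_\eta(x)/\partial \eta_i| \le \|\nabla_\eta \bar{s}_\eta(x)\|$ and $|\partial^2 \bar{s}_\eta(x)/\partial \eta_i \partial \eta_j| \le \|\nabla_\eta^2 \bar{s}_\eta(x)\|$ are then immediate, since each entry is dominated by the Euclidean or operator norm.

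For the converse I would assume the two uniform bounds and reconstruct the Lipschitz estimates by integrating along segments. For $\eta, \tilde{\eta} \in \Xi$, convexity keeps $t \mapsto \tilde{\eta} + t(\eta - \tilde{\eta})$ in $\Xi$, so $\bar{s}_\eta(x) - \bar{s}_{\tilde{\eta}}(x) = \int_0^1 \nabla_\eta \bar{s}_{\tilde{\eta} + t(\eta - \tilde{\eta})}(x)^\top (\eta - \tilde{\eta}) \, dt$, and bounding the integrand with the Cauchy-Schwarz inequality and the gradient bound gives $|\bar{s}_\eta(x) - \bar{s}_{\tilde{\eta}}(x)| \le L_1 \|\eta - \tilde{\eta}\|$ uniformly in $x$. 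The identical representation for $\eta \mapsto \nabla_\eta \bar{s}_\eta(x)$, using $\|\nabla_\eta^2 \bar{s}_\eta(x)\| \le L_2$, gives $\|\nabla_\eta \bar{s}_\eta(x) - \nabla_\eta \bar{s}_{\tilde{\eta}}(x)\| \le L_2 \|\eta - \tilde{\eta}\|$ uniformly in $x$, which are exactly the conditions of Assumption \ref{a:smooth}.

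I expect no substantial obstacle here, consistent with this being labelled a standard result; the only points needing care are bookkeeping. One must confirm that the constants obtained in each direction are genuinely uniform in $x$, which follows because the hypotheses are themselves uniform in $x$, and one must invoke convexity of $\Xi$ so that the segment-wise integral representation is valid. Because the two directions are exact inverses, the same pair $(L_1, L_2)$ works for both and no loss is incurred in the constants.
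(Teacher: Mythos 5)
The paper never proves this lemma---it is dismissed in the appendix with ``Lemmas \ref{lem:bded_grad_1} and \ref{lem:bded_grad_2} are standard results''---so there is no proof of record to compare against; your argument supplies the missing standard proof and is correct as far as it goes. Both directions are sound: difference quotients turn the uniform-in-$x$ Lipschitz hypotheses into the norm bounds $\|\nabla_\eta \bar{s}_\eta(x)\|\le L_1$ and $\|\nabla_\eta^2 \bar{s}_\eta(x)\|\le L_2$, and integration along segments---legitimate only because $\Xi$ is convex, so you are right to invoke Assumption \ref{a:convex_xi}, which the lemma statement silently presupposes---recovers the Lipschitz estimates from the norm bounds with the same constants. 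The componentwise dominations by the Euclidean and operator norms are immediate, as you say.

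One caveat is worth stating explicitly in your write-up. Your equivalence is proved under the reading that Assumption \ref{a:smooth} means: $\bar{s}_\eta(x)$ is twice differentiable in $\eta$ with $\bar{s}_\eta(x)$ and $\nabla_\eta\bar{s}_\eta(x)$ Lipschitz in $\eta$ uniformly in $x$. That is exactly how the assumption is consumed in part 1) of the proof of Theorem \ref{thm:uniqueness}, so your reading is defensible. However, the paper elsewhere extracts more from Assumption \ref{a:smooth}: in Lemma \ref{lem:delta_Ts}(ii) the second derivatives $\partial^2\bar{s}_\eta(x)/\partial\eta_i\partial\eta_j$ are taken to be continuous in $\eta$ uniformly in $x$, and in the proof of Proposition \ref{lem:smooth_lq_sbm} the paper bounds \emph{third} derivatives precisely in order to verify Lipschitz continuity of the second derivatives. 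Under that stronger reading (``twice Lipschitz continuously differentiable'' meaning all derivatives up to order two are Lipschitz, or even merely that the Hessian is continuous), the ``if'' direction of the lemma cannot be true: boundedness of $\nabla_\eta\bar{s}_\eta$ and $\nabla^2_\eta\bar{s}_\eta$ gives no control on the modulus of continuity of $\nabla^2_\eta\bar{s}_\eta$. This is an imprecision in the lemma as stated rather than a flaw in your argument, but you should say which reading you prove. A final minor point: in the forward direction the quotient $(\bar{s}_{\eta+hv}(x)-\bar{s}_\eta(x))/h$ requires $\eta+hv\in\Xi$; at boundary points of $\Xi$ restrict to admissible directions (or argue on the interior and extend by continuity), again using convexity.
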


\begin{lemma} \label{lem:bded_grad_2} 
Take $f:\mathbb{R}^K \to \mathbb{R}$ continuous and differentiable. 
If $\| \nabla_\eta f(\eta) \| \le L$ for all $\eta$ , then $f$ is $L$-Lipschitz continuous.
\end{lemma}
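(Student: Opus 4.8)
The plan is to reduce the multivariate claim to a one-dimensional mean value argument along the segment connecting two arbitrary points, and then to apply the Cauchy--Schwarz inequality together with the assumed uniform gradient bound.

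First I would fix arbitrary $\eta, \tilde{\eta} \in \mathbb{R}^K$ and parametrize the connecting segment by $\gamma(t) := \tilde{\eta} + t(\eta - \tilde{\eta})$ for $t \in [0,1]$. Setting $g(t) := f(\gamma(t))$, the chain rule (valid since $f$ is differentiable and $\gamma$ is affine) gives that $g$ is continuous on $[0,1]$ and differentiable on $(0,1)$, with $g'(t) = \nabla_\eta f(\gamma(t))^\top (\eta - \tilde{\eta})$.

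Next, by the one-dimensional mean value theorem there exists $t^\star \in (0,1)$ such that
\begin{align*}
    f(\eta) - f(\tilde{\eta}) = g(1) - g(0) = g'(t^\star) = \nabla_\eta f(\gamma(t^\star))^\top (\eta - \tilde{\eta}).
\end{align*}
Applying the Cauchy--Schwarz inequality and the hypothesis $\|\nabla_\eta f\| \le L$ then yields
\begin{align*}
    |f(\eta) - f(\tilde{\eta})| \le \|\nabla_\eta f(\gamma(t^\star))\|\, \|\eta - \tilde{\eta}\| \le L\, \|\eta - \tilde{\eta}\|.
\end{align*}
Since $\eta, \tilde{\eta}$ were arbitrary, this is exactly the claimed $L$-Lipschitz property.

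There is no substantial obstacle here, as the result is a textbook consequence of the mean value inequality. The only points worth flagging are that differentiability of $f$ legitimizes both the chain rule and the mean value theorem applied to $g$, and that the convexity of $\mathbb{R}^K$ guarantees the connecting segment remains in the domain, so the uniform gradient bound is valid at the intermediate point $\gamma(t^\star)$.
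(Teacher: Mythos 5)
Your proof is correct: the paper states this lemma without proof, explicitly labeling it a ``standard result,'' and your mean value theorem argument along the segment $\gamma(t)$ combined with Cauchy--Schwarz is exactly the textbook argument being invoked. Nothing is missing --- the chain rule, the applicability of the one-dimensional mean value theorem to $g$, and the convexity of $\mathbb{R}^K$ are precisely the points that justify the claim.
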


\begin{lemma} \label{lem:series}
Consider a series of the form
\begin{align*}
    f_h(\eta,x) := \sum_{k=h}^\infty k(k-1) \dots (k-h+1) \eta^{k-h} \alpha_k(x)
\end{align*}
for $\alpha_k \in L^2([0,1])$, $h\in\mathbb{N}$, $\eta \in \Xi\subset \mathbb{R}$ and $x \in [0,1]$.  Suppose that: i) $\exists\, \eta_{\rm max}$ such that $| \eta| \le \eta_{\rm max}$ for all $\eta\in\Xi$, ii) $\exists\, \beta >0$ such that $|\alpha_k(x)| \le \beta^k, \forall x\in[0,1], \forall k\in\mathbb{N}$ and iii) $\eta_{\rm max} \beta < 1$. Then the series $f_h(\eta,x)$ converges and is $L_h$-Lipschitz continuous in $\eta$ uniformly in $x$.
\end{lemma}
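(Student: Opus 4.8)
The plan is to recognize $f_h$ as the $h$-th formal $\eta$-derivative of the power series $g(\eta,x):=\sum_{k=0}^\infty \eta^k\alpha_k(x)$, since $k(k-1)\cdots(k-h+1)\,\eta^{k-h}=\frac{d^h}{d\eta^h}\eta^k$, and to prove both convergence and the Lipschitz bound by combining the Weierstrass $M$-test with a closed-form evaluation of a falling-factorial series.

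First I would establish convergence uniformly in $x$. Writing $r:=\eta_{\rm max}\beta<1$ and substituting $m=k-h$, the hypotheses $|\eta|\le\eta_{\rm max}$ and $|\alpha_k(x)|\le\beta^k$ bound the $k$-th term of $f_h$ by $\beta^h\frac{(m+h)!}{m!}r^m$, a quantity independent of $x$. Using the identity $\sum_{m=0}^\infty \frac{(m+h)!}{m!}r^m=\frac{h!}{(1-r)^{h+1}}$, valid for $|r|<1$ (the $h$-th derivative of the geometric series), the $M$-test yields absolute convergence of $f_h$, uniform in $x$.

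Next I would differentiate term by term. The key observation is that the formally differentiated series is exactly $f_{h+1}$, whose series converges uniformly in $x$ by the previous step applied with $h$ replaced by $h+1$ (this still requires only $r<1$). Uniform convergence of the differentiated series justifies the exchange of derivative and sum, so $f_h(\cdot,x)$ is differentiable with $\partial_\eta f_h(\eta,x)=f_{h+1}(\eta,x)$. Bounding $f_{h+1}$ by the same $M$-test computation (now with $h\mapsto h+1$) gives $|\partial_\eta f_h(\eta,x)|\le L_h:=\frac{(h+1)!\,\beta^{h+1}}{(1-r)^{h+2}}$ uniformly over $\eta\in\Xi$ and $x\in[0,1]$.

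Finally, since $\Xi\subset\mathbb{R}$ makes $\eta$ scalar and the bound $L_h$ does not depend on $x$, applying Lemma~\ref{lem:bded_grad_2} with $K=1$ for each fixed $x$ shows that $f_h(\cdot,x)$ is $L_h$-Lipschitz, uniformly in $x$, as claimed. I expect the only delicate point to be the justification of term-by-term differentiation; rather than differentiating $f_h$ directly, the argument leans on the independently established uniform convergence of the $f_{h+1}$ series, which is precisely the standard hypothesis permitting interchange of $\partial_\eta$ and $\sum_k$. Everything else reduces to evaluating the falling-factorial series in closed form.
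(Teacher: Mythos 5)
Your proof is correct, and its convergence half coincides with the paper's: the paper also bounds the tail by $\beta^h \sum_{k\ge h} k(k-1)\cdots(k-h+1)(|\eta|\beta)^{k-h}$ and evaluates this as the $h$-th derivative of the geometric series, obtaining the same constant $B_h = \beta^h h!/(1-\eta_{\rm max}\beta)^{h+1}$. For the Lipschitz half you genuinely diverge: the paper never differentiates the series in $\eta$; it bounds $|f_h(\eta,x)-f_h(\tilde\eta,x)|$ directly via the algebraic identity $x^{p+1}-y^{p+1}=(x-y)(x^p+x^{p-1}y+\cdots+y^p)$, which gives $|\eta^{k-h}-\tilde\eta^{k-h}|\le (k-h)\eta_{\rm max}^{k-h-1}|\eta-\tilde\eta|$ and hence the constant $L_h=B_{h+1}$. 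You instead observe that the formally differentiated series is exactly $f_{h+1}$, justify the interchange of $\partial_\eta$ and $\sum_k$ by uniform convergence of the $f_{h+1}$ series (equivalently, by noting the power series has radius of convergence at least $1/\beta>\eta_{\rm max}$), and conclude by a bounded-derivative argument — arriving at the same constant $(h+1)!\,\beta^{h+1}/(1-\eta_{\rm max}\beta)^{h+2}$. Your route is less elementary (it needs the term-by-term differentiation theorem) but buys more: it proves $f_h$ is differentiable in $\eta$ with $\partial_\eta f_h=f_{h+1}$, a structural fact that is exactly what Proposition~\ref{lem:smooth_lq_hom} exploits when expressing the equilibrium's derivatives through $f_1$ and $f_2$, whereas the paper's telescoping argument yields only the Lipschitz property. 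One small point to tighten: Lemma~\ref{lem:bded_grad_2} (bounded gradient implies Lipschitz) implicitly requires a convex domain, and $\Xi\subset\mathbb{R}$ is not assumed to be an interval; but your derivative bound holds on all of $[-\eta_{\rm max},\eta_{\rm max}]$, which is convex and contains $\Xi$, so you should apply the mean value theorem on that interval (or on the segment joining $\eta$ and $\tilde\eta$) rather than on $\Xi$ itself.
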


\begin{proof} 
First, we show that $f_h(\eta,x)$ converges by comparison test with a convergent geometric series.
Note that
\begingroup
\allowdisplaybreaks
\begin{align*}
    \textstyle  \sum_{k=h}^\infty & \textstyle  \left | k(k-1) \dots (k-h+1) \eta^{k-h} \alpha_k(x) \right | \\
    &\textstyle  \le \beta^h \sum_{k=h}^\infty k(k-1) \dots (k-h+1) ( |\eta| \beta)^{k-h}\\
    &\textstyle \overset{(*)}{=} \beta^h \frac{d^h}{dz^h}\left(\frac{1}{1-z}\middle) \right|_{z = |\eta| \beta} 
    = \beta^h \frac{h!}{(1-|\eta|\beta)^{h+1}}\\
    &\textstyle \le \beta^h \frac{h!}{(1-\eta_{\rm max} \beta)^{h+1}}
    =: B_{h}
\end{align*}
\endgroup
where $(*)$ follows from \cite{andrews1998geometric} since $|\eta| \beta < 1$.
Hence, by the comparison test for series, $f_h(\eta,x)$ converges and $| f_h(\eta,x)| \le B_h$ for all $\eta\in\Xi$ and $x\in[0,1]$. 
Moreover, for all $h$, $f_h(\eta,x)$ is Lipschitz continuous in $\eta$ uniformly in $x$ since, if we denote $a_{h,k}:=k(k-1)\dots(k-h+1)$,
\begin{align*}
     \textstyle  |f_h(\eta&,x) - f_h (\tilde{\eta},x)|
     \textstyle  \le \left | \sum_{k=h}^\infty a_{h,k} ( \eta^{k-h} - \tilde{\eta}^{k-h} ) \alpha_k(x) \right |\\
    &\textstyle  \overset{(*)}{\le} \sum_{k=h+1}^\infty a_{h,k} | \eta - \tilde{\eta} | (k-h) ( \eta_{\rm max})^{k-h-1} | \alpha_k(x)  |\\
    &\textstyle  \le B_{h+1} |\eta - \tilde{\eta} | 
    =: L_h |\eta - \tilde{\eta} | \quad \forall x \in [0,1]
\end{align*}
where $(*)$ holds by the identity \cite[p.83]{selby1967standard} $x^{p+1}-y^{p+1}=(x-y)(x^p+x^{p-1}y+\dots+xy^{p-1}+y^p)$ and $|\eta|, |\tilde{\eta}| \le \eta_{\rm max}$.
\end{proof}

\begin{lemma} \label{lem:sbm}
Consider the Nash equilibrium $\bar{s}_\eta$ of a LQ graphon game with graphon $W_{\rm SBM}$ defined as in \eqref{sbm_W}, payoff defined as in \eqref{lin_game_SBM}, with $\eta \in \mathbb{R}_+^K$ and satisfying Assumption \ref{a:internality}.
Then, $\bar{s}_\eta(x) = \bar{\bar{s}}_\eta^k$ for all $x \in \mathcal{C}_k$
where $\bar{\bar{s}}_\eta \in\mathbb{R}^K$ can be characterized as
\begin{align*}
    \bar{\bar{s}}_\eta = \theta_1 ( I - \Delta_\eta Q \Delta_\pi)^{-1} \mathds{1}
\end{align*}
where $\Delta_\eta := \textrm{diag}(\eta_1, ..., \eta_K)$ and $\Delta_\pi := \textrm{diag}(\pi_1,...,\pi_K)$.
Additionally, 
\begin{align*}
    \|\bar{s}_\eta - \bar{s}_{\bar{\eta}}\|_{L^2} \ge \sqrt{\min_k(\pi_k)} \| \bar{\bar{s}}_\eta - \bar{\bar{s}}_{\bar{\eta}} \| .
\end{align*}
\end{lemma}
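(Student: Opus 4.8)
The plan is to first establish that under interiority the equilibrium is constant on each community block, derive the finite-dimensional fixed-point system, and then reduce the $L^2$ bound to an elementary weighted-norm comparison. For the characterization, I would start from Assumption~\ref{a:internality}, which forces $\bar{s}_\eta(x)\in\textrm{int}(\mathcal{S})$ for every $x$. Hence the projection in the heterogeneous analogue of the best-response fixed point \eqref{lq_nash_eq_proj} is inactive and the first-order optimality condition $\nabla_s U=0$ holds pointwise (paralleling the unprojected form \eqref{lq_nash_eq}). Writing $k(x)$ for the unique index with $x\in\mathcal{C}_{k(x)}$, for $x\in\mathcal{C}_k$ this reads
$$\bar{s}_\eta(x)=\theta_1+\eta_k\,\bar{z}_\eta(x),\qquad \bar{z}_\eta(x)=\int_0^1 W_{\rm SBM}(x,y)\,\bar{s}_\eta(y)\,dy.$$

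The key observation is that, because $W_{\rm SBM}(x,y)=Q_{ij}$ for $x\in\mathcal{C}_i,\ y\in\mathcal{C}_j$, the aggregate depends on $x$ only through its community index: for $x\in\mathcal{C}_i$ one has $\bar{z}_\eta(x)=\sum_{j=1}^K Q_{ij}\int_{\mathcal{C}_j}\bar{s}_\eta(y)\,dy$, which is constant over $\mathcal{C}_i$. Combined with the first-order condition, this immediately forces $\bar{s}_\eta$ to be block-constant, say $\bar{s}_\eta(x)=\bar{\bar{s}}_\eta^k$ for $x\in\mathcal{C}_k$. I would then substitute $\int_{\mathcal{C}_j}\bar{s}_\eta(y)\,dy=\pi_j\bar{\bar{s}}_\eta^j$ into the first-order condition to obtain the finite-dimensional system $\bar{\bar{s}}_\eta=\theta_1\mathds{1}+\Delta_\eta Q\Delta_\pi\,\bar{\bar{s}}_\eta$, i.e. $(I-\Delta_\eta Q\Delta_\pi)\bar{\bar{s}}_\eta=\theta_1\mathds{1}$. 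Invertibility of $I-\Delta_\eta Q\Delta_\pi$ follows from Assumption~\ref{a:feas_eta}: on the finite-dimensional invariant subspace of block-constant functions, the operator $\mathbb{W}$ acts exactly as the matrix $Q\Delta_\pi$, so invertibility of $\mathbb{I}-D_\eta\mathbb{W}$ (where $D_\eta$ denotes multiplication by $\eta_{k(x)}$), which the existence-uniqueness assumption guarantees through the contracting best response, restricts to invertibility of $I-\Delta_\eta Q\Delta_\pi$. Solving yields the claimed formula $\bar{\bar{s}}_\eta=\theta_1(I-\Delta_\eta Q\Delta_\pi)^{-1}\mathds{1}$.

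For the second claim I would use the block-constant structure just established for both $\bar{s}_\eta$ and $\bar{s}_{\bar{\eta}}$ (both are interior by Assumption~\ref{a:internality}). Since each difference $\bar{s}_\eta(x)-\bar{s}_{\bar{\eta}}(x)$ equals $\bar{\bar{s}}_\eta^k-\bar{\bar{s}}_{\bar{\eta}}^k$ on $\mathcal{C}_k$, and $\mathcal{C}_k$ has length $\pi_k$, the squared $L^2$ norm splits over the blocks:
\begin{align*}
\|\bar{s}_\eta-\bar{s}_{\bar{\eta}}\|_{L^2}^2
&=\sum_{k=1}^K\pi_k\big(\bar{\bar{s}}_\eta^k-\bar{\bar{s}}_{\bar{\eta}}^k\big)^2\\
&\ge\min_k(\pi_k)\sum_{k=1}^K\big(\bar{\bar{s}}_\eta^k-\bar{\bar{s}}_{\bar{\eta}}^k\big)^2
=\min_k(\pi_k)\,\|\bar{\bar{s}}_\eta-\bar{\bar{s}}_{\bar{\eta}}\|^2,
\end{align*}
and taking square roots gives the stated inequality.

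I expect the only genuine obstacle to be the rigorous justification of the block-constant step: one must argue that the given interior equilibrium is truly constant on each $\mathcal{C}_k$, not merely that a block-constant solution of the linear system exists. The clean resolution is that the pointwise first-order condition expresses $\bar{s}_\eta(x)$ as a function of $x$ solely through $\eta_{k(x)}$ and $\bar{z}_\eta(x)$, both of which are block-constant, so block-constancy of $\bar{s}_\eta$ follows directly with no almost-everywhere subtlety; invertibility of $I-\Delta_\eta Q\Delta_\pi$ then pins the block values down uniquely, identifying them with the asserted closed-form expression.
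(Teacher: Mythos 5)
Your proof is correct, and the two halves of it compare differently with the paper. For the norm inequality your argument is literally the paper's: decompose the squared $L^2$ norm over the blocks $\mathcal{C}_k$ of measure $\pi_k$ and bound below by $\min_k \pi_k$. For the block-constant characterization, however, the paper offers no derivation at all---it invokes ``a generalization of \cite[Proposition 1]{avella2018centrality}''---whereas you prove it from scratch: Assumption \ref{a:internality} deactivates the projection (a projection onto a closed convex set that lands in the interior must equal its argument), the piecewise constancy of $W_{\rm SBM}$ in $x$ makes $\bar z_\eta$ constant on each $\mathcal{C}_i$ irrespective of any structure on $\bar s_\eta$, the pointwise first-order condition then transfers this constancy to $\bar s_\eta$ itself, and integrating block by block yields $(I - \Delta_\eta Q \Delta_\pi)\bar{\bar s}_\eta = \theta_1 \mathds{1}$. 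This is the natural argument, it resolves the one genuinely delicate point (that the given equilibrium \emph{is} block-constant, not merely that a block-constant solution exists), and it makes the lemma self-contained where the paper outsources it. The only loose joint is your justification of invertibility: contraction of the \emph{projected} best response does not by itself imply invertibility of $\mathbb{I} - D_\eta \mathbb{W}$, since the nonexpansive projection could mask an expansive linear part. The clean route---the same one the paper uses when proving Proposition \ref{lem:sbm_id}---is to use the spectral condition inside Assumption \ref{a:feas_eta} directly: $\|\Delta_\eta Q \Delta_\pi\| \le \max_k(\eta_k)\,\lambda_{\rm max}(Q\Delta_\pi) = \max_k(\eta_k)\,\lambda_{\rm max}(\mathbb{W}) < 1$ (by \cite[Lemma 10]{parise2019graphon} and symmetry of $Q$), so $I - \Delta_\eta Q \Delta_\pi$ is invertible via a Neumann series; alternatively, your restriction argument closes once you note that an injective linear map of the finite-dimensional invariant subspace of block-constant functions into itself is automatically bijective.
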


\begin{proof}
    The first part of the statement follows from a generalization of \cite[Proposition 1]{avella2018centrality}.
    Then,
    $
    \| \bar{s}_{\eta} - \bar{s}_{\bar{\eta}} \|_{L^2}^2 =  \sum_{k=1}^K \int_{\mathcal{C}_k} (\bar{\bar{s}}_{\eta}^k - \bar{\bar{s}}_{\bar{\eta}}^k)^2 dx
    = \sum_{k=1}^K \pi_k (\bar{\bar{s}}_{\eta}^k - \bar{\bar{s}}_{\bar{\eta}}^k)^2 
    = (\bar{\bar{s}}_{\eta} - \bar{\bar{s}}_{\bar{\eta}})^T \Delta_\pi (\bar{\bar{s}}_{\eta} - \bar{\bar{s}}_{\bar{\eta}})  
    \ge \min_k(\pi_k) \|\bar{\bar{s}}_{\eta} - \bar{\bar{s}}_{\bar{\eta}}\|^2.
    $
\end{proof}

\subsection{Auxiliary lemmas for the proof of Theorem \ref{thm:uniqueness}}

\begin{lemma} \label{lem:T1_eta_bar}
Suppose that Assumptions \ref{a:feas_eta}, \ref{a:identifiability}, \ref{a:convex_xi} and \ref{a:smooth} hold.
Then, $T_1(\bar{\eta})$ as defined in \eqref{eq:H} is positive definite.
\end{lemma}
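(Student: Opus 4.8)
The plan is to exploit the fact that $T_1(\bar\eta)$ is manifestly positive \emph{semi}definite, since it is an integral of outer products, and to upgrade this to positive definiteness by bounding it below by a positive multiple of the identity, with the lower bound governed by the identifiability constant $L_{\bar\eta}$. First I would fix an arbitrary unit vector $v \in \mathbb{R}^n$ and record the quadratic form
\begin{align*}
v^T T_1(\bar\eta) v = \int_0^1 \big(v^T \nabla_\eta \bar{s}_{\bar\eta}(x)\big)^2 dx = \big\| v^T \nabla_\eta \bar{s}_{\bar\eta}(\cdot) \big\|_{L^2}^2 \ge 0.
\end{align*}
Positive definiteness therefore reduces to showing that this directional derivative of the equilibrium, viewed as a function of $x$, cannot vanish in $L^2$ for any nonzero $v$; equivalently, that the $n$ partial-derivative functions $\{\partial \bar{s}_{\bar\eta}/\partial \eta_i\}_{i=1}^n$ are linearly independent in $L^2([0,1])$.

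Next I would extract this non-degeneracy directly from identifiability. Since $\bar\eta \in \mathrm{int}(\Xi)$ by Assumption~\ref{a:convex_xi} and $\Xi$ is convex, the perturbed parameter $\eta_t := \bar\eta + t v$ lies in $\Xi$ for all sufficiently small $t > 0$, with $\|\bar\eta - \eta_t\| = t$. By the twice Lipschitz continuous differentiability of $\bar{s}_\eta(x)$ in $\eta$ uniformly in $x$ (Assumption~\ref{a:smooth}), a first-order Taylor expansion gives, uniformly in $x$,
\begin{align*}
\bar{s}_{\eta_t}(x) - \bar{s}_{\bar\eta}(x) = t\, v^T \nabla_\eta \bar{s}_{\bar\eta}(x) + r_t(x), \qquad \sup_{x} |r_t(x)| \le \tfrac{1}{2} L_2 t^2,
\end{align*}
so that $\|\bar{s}_{\eta_t} - \bar{s}_{\bar\eta}\|_{L^2} = t\,\|v^T\nabla_\eta \bar{s}_{\bar\eta}(\cdot)\|_{L^2} + O(t^2)$ after the triangle inequality in $L^2$. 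Feeding $\eta = \eta_t$ into the identifiability bound in Assumption~\ref{a:identifiability} yields $t \le L_{\bar\eta}\|\bar{s}_{\bar\eta} - \bar{s}_{\eta_t}\|_{L^2}$; dividing by $t$ and letting $t \downarrow 0$ then produces $\big\| v^T \nabla_\eta \bar{s}_{\bar\eta}(\cdot) \big\|_{L^2} \ge 1/L_{\bar\eta}$.

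Combining the two displays gives $v^T T_1(\bar\eta) v \ge 1/L_{\bar\eta}^2$ for every unit vector $v$, hence $T_1(\bar\eta) \succeq L_{\bar\eta}^{-2} I \succ 0$, which is the claim. The one point that requires care, and which I regard as the main technical step, is the uniform-in-$x$ control of the Taylor remainder $r_t$: I would obtain it from Assumption~\ref{a:smooth} together with Lemma~\ref{lem:bded_grad_1}, whose bound $\|\nabla_\eta^2 \bar{s}_\eta(x)\| \le L_2$ uniformly in $x$ makes the remainder $O(t^2)$ pointwise and hence negligible in the $L^2$ norm after dividing by $t$. This uniform bound is exactly what allows the pointwise Taylor estimate to pass through the $L^2$ norm; everything else is a direct application of identifiability along the single perturbation direction $v$.
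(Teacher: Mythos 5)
Your proof is correct, and it rests on the same two pillars as the paper's: a second-order Taylor expansion of $\bar{s}_\eta(x)$ around $\bar{\eta}$ with remainder controlled uniformly in $x$ via the Hessian bound $L_2$ of Lemma \ref{lem:bded_grad_1}, played against the identifiability inequality of Assumption \ref{a:identifiability}. The execution differs in a way worth recording. The paper argues by contradiction: it fixes a putative null direction $v$ of $T_1(\bar{\eta})$, scaled so that $\|v\| < 2/(L_{\bar{\eta}}L_2)$, notes that the first-order Taylor term then vanishes almost everywhere, and derives $\|v\|/L_{\bar{\eta}} \le \|\bar{s}_{\bar{\eta}+v}-\bar{s}_{\bar{\eta}}\|_{L^2} \le \tfrac{1}{2}L_2\|v\|^2$, forcing $\|v\| \ge 2/(L_{\bar{\eta}}L_2)$, a contradiction. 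You instead keep the first-order term for an arbitrary unit direction, scale the perturbation by $t$, divide the identifiability inequality by $t$, and let $t \downarrow 0$. This direct route proves strictly more than the lemma asserts: the uniform spectral floor $T_1(\bar{\eta}) \succeq L_{\bar{\eta}}^{-2} I$, whereas the contradiction argument only yields $\lambda_{\min}(T_1(\bar{\eta}))>0$ with no constant. That explicit constant is genuinely useful downstream, since via \eqref{eq:H} it exhibits how the local strong convexity modulus of $J$ in Theorem \ref{thm:uniqueness} (roughly $2/L_{\bar{\eta}}^2$, up to the vanishing $T_2^N$ contribution) is governed by the identifiability constant. Two minor notes: only the upper bound $\|\bar{s}_{\eta_t}-\bar{s}_{\bar{\eta}}\|_{L^2} \le t\|v^T\nabla_\eta\bar{s}_{\bar{\eta}}\|_{L^2} + \tfrac{1}{2}L_2 t^2$ is needed from your two-sided triangle-inequality estimate; and your citation of Assumption \ref{a:convex_xi} to justify $\bar{\eta}+tv\in\Xi$ is the correct one --- the paper's own proof invokes Assumption \ref{a:internality} at the analogous step, which appears to be a typo.
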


\begin{proof}
Showing that $T_1(\bar{\eta})$ is positive definite 
requires that for any nonzero $v \in \mathbb{R}^n$,
\begin{align}
    & v^T T_1(\bar{\eta}) v > 0 \Leftrightarrow  \int_0^1 ( v^T \nabla_\eta \bar{s}_{\bar{\eta}}(x) )^2 dx > 0. \label{eq:pd_T1}
\end{align}
Suppose that there exists $v \ne 0$ such that $v^T \nabla_\eta \bar{s}_\eta(x) = 0$ \textit{almost everywhere in $x$} (so that the integral in \eqref{eq:pd_T1} would be zero).
Let $\tilde{\eta} = \bar{\eta} + v$ be a perturbation of the true parameter $\bar{\eta}$, where, without loss of generality, we assume that the perturbation is small enough to ensure that $\bar{\eta} + v \in \textrm{int}(\Xi)$ and that $\|v\| < \frac{2}{L_{\bar{\eta}} L_2} $ where $L_{\bar{\eta}}$ is as defined in Assumption~\ref{a:identifiability} and $L_2$ is as defined in Lemma~\ref{lem:bded_grad_1} (by Assumption \ref{a:smooth}). 
Such a perturbation $v$ must exist since $\bar{\eta}$ is in the interior of~$\Xi$ (by Assumption \ref{a:internality}). 
This perturbed parameter $\tilde{\eta}$ results in a new equilibrium $\bar{s}_{\tilde{\eta}} = \bar{s}_{\bar{\eta} + v}$.
By Taylor's expansion for multivariate functions \cite[Theorem 12.14]{apostol1974analysis}, for almost every~$x$, we have
\begin{align*}
    \bar{s}_{\tilde{\eta}}(x) &= \bar{s}_{\bar{\eta}}(x) + \overbrace{(\tilde{\eta} - \bar{\eta})^T \nabla_\eta \bar{s}_{\bar{\eta}} (x) }^{=v^T\nabla_\eta \Bar{s}_{\Bar{\eta}}(x)=0}\\
    &\quad + \frac{1}{2} (\tilde{\eta} - \bar{\eta})^T \Big(  \nabla_{\eta}^2 \bar{s}_\eta (x) \Big |_{\eta = \bar{\eta} + \zeta_x v} \Big) (\tilde{\eta} - \bar{\eta})
\end{align*}
for some $\zeta_x \in[0,1]$. 
Note that $\bar{\eta} + \zeta_x v \in \textrm{int}(\Xi)$ since it is a convex combination of $\bar{\eta}$ and $\tilde{\eta}$, and $\Xi$ is a convex set.
By Lemma \ref{lem:bded_grad_1}, this implies that almost everywhere in $x$,
\begin{align*}
    | \bar{s}_{\tilde{\eta}}(x) - \bar{s}_{\bar{\eta}}(x)| &= \frac{1}{2} \left| v^T \Big ( \nabla_{\eta}^2 \bar{s}_\eta (x) \Big |_{\eta = \bar{\eta} + \zeta_x  v} \Big) v \right| \le \frac{1}{2} L_2 \|v\|^2.
\end{align*}
Then, $\| \bar{s}_{\tilde{\eta}} - \bar{s}_{\eta} \|_{L^2} \le \frac{1}{2} L_2 \|v\|^2$.
Identifiability of $\bar{\eta}$ yields 
\begin{align*}
     \| \tilde{\eta} - \bar{\eta} \|  \le L_{\bar{\eta}} \| \bar{s}_{\tilde{\eta}} - \bar{s}_{\bar{\eta}} \|_{L^2} 
    \Leftrightarrow \| \bar{s}_{\tilde{\eta}} - \bar{s}_{\bar{\eta}} \|_{L^2}  \ge \frac{ \| v\|}{L_{\bar{\eta}}}. 
\end{align*}
Together, these two inequalities imply that
$
    \|v\|/L_{\bar{\eta}} \le \frac{1}{2} L_2 \|v\|^2 
    \Leftrightarrow \|v\| \ge \frac{2}{L_{\bar{\eta}} L_2}
$
which contradicts our choice of $v$ such that $ \| v\| <  \frac{2}{L_{\bar{\eta}} L_2} $.
\end{proof}

\begin{lemma} \label{lem:T2_eta_bar}
Suppose that Assumptions \ref{a:feas_eta} and \ref{a:smooth} hold. 
Then, $T_2^N(\bar{\eta})$ as defined in \eqref{eq:H} converges to the zero matrix in probability, i.e., for all $\delta > 0$ and for all $\epsilon_1 > 0$, there exists $\bar{N}$ such that if $N > \bar{N}$, 
\begin{align*}
    \mathbb{P} \left ( \left [T_2^N (\bar{\eta}) \right]_{ij} \le \epsilon_1 \quad  \forall i,j \right ) \ge 1 - \delta.
\end{align*}
\end{lemma}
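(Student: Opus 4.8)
The plan is to bound each entry of $T_2^N(\bar\eta)$ by the product of the $L^2$ discrepancy $\|\bar{s}^{[N]} - \bar{s}_{\bar\eta}\|_{L^2}$ and a deterministic, uniformly bounded Hessian factor, and then transfer the almost sure convergence guaranteed by Proposition~\ref{prop:convergence} into the desired in-probability statement. The structure mirrors the $L$-smoothness argument in part~1) of the theorem, in that the heavy lifting is done by Cauchy--Schwarz together with the uniform bounds of Lemma~\ref{lem:bded_grad_1}.

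First I would write out the $(i,j)$ entry explicitly as $[T_2^N(\bar\eta)]_{ij} = \int_0^1 (\bar{s}^{[N]}(x) - \bar{s}_{\bar\eta}(x))\, [\nabla_\eta^2 \bar{s}_{\bar\eta}(x)]_{ij}\, dx$ and apply the Cauchy--Schwarz inequality to obtain $|[T_2^N(\bar\eta)]_{ij}| \le \|\bar{s}^{[N]} - \bar{s}_{\bar\eta}\|_{L^2}\, \|[\nabla_\eta^2 \bar{s}_{\bar\eta}]_{ij}\|_{L^2}$. The second factor is deterministic, since it depends only on the graphon equilibrium evaluated at the true parameter and not on the random sample; by Lemma~\ref{lem:bded_grad_1} it is bounded pointwise by $L_2$ uniformly in $x$, hence $\|[\nabla_\eta^2 \bar{s}_{\bar\eta}]_{ij}\|_{L^2} \le L_2$. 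This yields the clean bound $|[T_2^N(\bar\eta)]_{ij}| \le L_2 \|\bar{s}^{[N]} - \bar{s}_{\bar\eta}\|_{L^2}$ holding for every $i,j$ simultaneously.

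The key observation is that $\bar{s}^{[N]}$ is, by construction, the interpolated equilibrium of a network game sampled from the graphon game with the \emph{true} parameter $\bar\eta$, so the graphon equilibrium it converges to is precisely $\bar{s}_{\bar\eta}$. Proposition~\ref{prop:convergence} therefore gives $\|\bar{s}^{[N]} - \bar{s}_{\bar\eta}\|_{L^2} \overset{a.s.}{\to} 0$. Since almost sure convergence implies convergence in probability, for any $\delta>0$ and $\epsilon_1>0$ I can choose $\bar N$ large enough that $\mathbb{P}(L_2 \|\bar{s}^{[N]} - \bar{s}_{\bar\eta}\|_{L^2} \le \epsilon_1) \ge 1-\delta$ for all $N>\bar N$; on this event the bound from the previous step forces $|[T_2^N(\bar\eta)]_{ij}| \le \epsilon_1$ for all $i,j$ at once, which in particular gives the stated one-sided inequality.

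There is no serious obstacle here: the entire randomness of $T_2^N(\bar\eta)$ is carried by the single scalar $\|\bar{s}^{[N]} - \bar{s}_{\bar\eta}\|_{L^2}$, while the Hessian factor is a fixed, bounded, deterministic object. The only point requiring care is to confirm that the parameter generating the sampled equilibrium matches the parameter at which the Hessian is evaluated (both equal $\bar\eta$), so that Proposition~\ref{prop:convergence} applies verbatim; everything else reduces to a routine Cauchy--Schwarz estimate combined with the standard fact that almost sure convergence implies convergence in probability.
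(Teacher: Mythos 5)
Your proposal is correct and follows essentially the same route as the paper's proof: a Cauchy--Schwarz (H\"older) bound on each entry, the uniform Hessian bound $L_2$ from Lemma~\ref{lem:bded_grad_1}, and Proposition~\ref{prop:convergence} to convert almost sure convergence of $\|\bar{s}^{[N]}-\bar{s}_{\bar\eta}\|_{L^2}$ into the in-probability statement. Your write-up is in fact slightly more careful than the paper's on the final step (explicitly invoking that a.s.\ convergence implies convergence in probability, and noting that the single random scalar controls all entries simultaneously), but this is a refinement of the same argument, not a different one.
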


\begin{proof}
Consider the $ij$th element of $T_2^N(\bar{\eta})$. 
By Hölder's inequality and Lemma \ref{lem:bded_grad_1}, there exists $L_2 >0 $ such that
\begin{align*}
     &[T_2^N(\bar{\eta})]_{ij} = \int_0^1 ( \bar{s}^{[N]}(x) - \bar{s}_{\bar{\eta}}(x) ) \frac{\partial^2 \bar{s}_{\bar{\eta}}(x)}{\partial \eta_i \eta_j} dx\\
     & \le \left( \int_0^1 ( \bar{s}^{[N]}(x) - \bar{s}_{\bar{\eta}}(x) )^2 dx \right)^{\frac{1}{2}}  \left( \int_0^1 \left( \frac{\partial^2 \bar{s}_{\bar{\eta}}(x)}{\partial \eta_i \eta_j} \right)^2 dx\right)^{\frac{1}{2}}\\
     & \le L_2 \| \bar{s}^{[N]} - \bar{s}_{\bar{\eta}} \|_{L^2} .
\end{align*}

By Proposition \ref{prop:convergence}, $\| \bar{s}^{[N]} - \bar{s}_{\bar{\eta}} \|_{L^2} \overset{\textrm{a.s.}}{\to} 0$ as $N\to\infty$ which implies that, for all $\delta>0$, $\| \bar{s}^{[N]} - \bar{s}_{\bar{\eta}} \|_{L^2} \to 0$ with probability $1-\delta$.
Therefore, for all $\delta>0$, and all $\epsilon_1>0$, there exists some $\bar{N}$ such that for $N > \bar{N}$,
\begin{align*}
    [T_2^N(\bar{\eta})]_{ij}  \le L_2 \| \bar{s}^{[N]} - \bar{s}_{\bar{\eta}} \|_{L^2} \le \epsilon_1 \quad \textrm{w.p. } 1-\delta. \quad \quad \qquad \qedhere
\end{align*}
\end{proof}

\begin{lemma} \label{lem:delta_Ts}
Suppose that Assumption \ref{a:smooth} holds. Then,
\begin{enumerate}[label=(\roman*)]
    \item $T_1(\eta)$ in \eqref{eq:H} is continuous in $\eta$, i.e. for all $\epsilon_2>0$, there exists $\mu_1>0$ such that if $\|\eta - \bar{\eta}\| \le \mu_1$, then $\Big | [T_1(\eta)]_{ij} - [T_1(\bar{\eta})]_{ij} \Big | \le \epsilon_2 $ for all $i,j$, and
    \item $T_2^N(\eta)$ in \eqref{eq:H} is continuous in $\eta$ uniformly in $N$, i.e., for all $\epsilon_3>0$, there exists $\mu_2>0$ such that if $\|\eta - \bar{\eta}\| \le \mu_2$, then $\Big | [T_2^N(\eta)]_{ij} - [T_2^N(\bar{\eta})]_{ij} \Big | \le \epsilon_3$ for all $i,j, N.$
\end{enumerate}
\end{lemma}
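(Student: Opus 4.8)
The plan is to prove both continuity statements by the same elementary device: express each matrix-entry difference as a single integral, apply the ``add and subtract'' decomposition to isolate differences of the integrand factors, and then bound each resulting term using the uniform bounds $L_1,L_2$ from Lemma~\ref{lem:bded_grad_1} together with the Lipschitz estimates on the first and second derivatives of $\bar{s}_\eta$ supplied by Assumption~\ref{a:smooth}.

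For part (i), I would write $[T_1(\eta)]_{ij}=\int_0^1 [\nabla_\eta \bar{s}_\eta(x)]_i\,[\nabla_\eta \bar{s}_\eta(x)]_j\,dx$, so the integrand of $[T_1(\eta)]_{ij}-[T_1(\bar\eta)]_{ij}$ is a product difference $a_ia_j-b_ib_j$ with $a:=\nabla_\eta \bar{s}_\eta(x)$ and $b:=\nabla_\eta \bar{s}_{\bar\eta}(x)$. Using $a_ia_j-b_ib_j=a_i(a_j-b_j)+b_j(a_i-b_i)$, I bound $|a_i|,|b_j|\le L_1$ by Lemma~\ref{lem:bded_grad_1} and the component differences by $\|\nabla_\eta \bar{s}_\eta(x)-\nabla_\eta \bar{s}_{\bar\eta}(x)\|\le L_2\|\eta-\bar\eta\|$ (the pointwise-in-$x$ gradient Lipschitz bound from Assumption~\ref{a:smooth}), obtaining $|a_ia_j-b_ib_j|\le 2L_1L_2\|\eta-\bar\eta\|$ for every $x$. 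Integrating gives $|[T_1(\eta)]_{ij}-[T_1(\bar\eta)]_{ij}|\le 2L_1L_2\|\eta-\bar\eta\|$, so $\mu_1:=\epsilon_2/(2L_1L_2)$ works.

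For part (ii), with the abbreviations $p:=\bar{s}^{[N]}(x)-\bar{s}_\eta(x)$, $q:=\bar{s}^{[N]}(x)-\bar{s}_{\bar\eta}(x)$, $r:=[\nabla_\eta^2 \bar{s}_\eta(x)]_{ij}$ and $t:=[\nabla_\eta^2 \bar{s}_{\bar\eta}(x)]_{ij}$, the integrand of the entry difference is $pr-qt=p(r-t)+t(p-q)$. Here $p-q=\bar{s}_{\bar\eta}(x)-\bar{s}_\eta(x)$ carries no $N$-dependence and is bounded by $L_1\|\eta-\bar\eta\|$; the factor $r-t$ is controlled by Lipschitz continuity of the second derivative (available precisely because the equilibrium is \emph{twice} Lipschitz continuously differentiable), say $\|\nabla_\eta^2 \bar{s}_\eta(x)-\nabla_\eta^2 \bar{s}_{\bar\eta}(x)\|\le L_3\|\eta-\bar\eta\|$; $|t|\le L_2$ by Lemma~\ref{lem:bded_grad_1}; and, crucially, $|p|\le 2s_{\rm max}$ for \emph{every} $N$ since both $\bar{s}^{[N]}(x)$ and $\bar{s}_\eta(x)$ take values in $\mathcal{S}$ (Assumption~\ref{a:strategy_set}). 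Combining, $|pr-qt|\le(2s_{\rm max}L_3+L_1L_2)\|\eta-\bar\eta\|$ pointwise in $x$, with \emph{no dependence on $N$}; integrating and setting $\mu_2:=\epsilon_3/(2s_{\rm max}L_3+L_1L_2)$ finishes the claim.

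I expect the only genuine subtlety to be securing the uniformity in $N$ in part (ii). The $N$-dependence enters solely through the factor $\bar{s}^{[N]}$, and the decomposition $pr-qt=p(r-t)+t(p-q)$ is arranged precisely so that this $N$-dependent factor $p$ is paired with the small, $N$-independent quantity $r-t$, while the remaining term $t(p-q)$ is $N$-independent altogether. The single ingredient that closes the argument is then the uniform bound $|\bar{s}^{[N]}(x)|\le s_{\rm max}$, valid for all $N$ because every sampled equilibrium takes values in the compact set $\mathcal{S}$.
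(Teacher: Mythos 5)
Your proof is correct and follows essentially the same route as the paper's: an add--and--subtract decomposition of the entrywise integrands, with the $N$-dependent factor $\bar{s}^{[N]}$ bounded uniformly by $s_{\rm max}$ (Assumption \ref{a:strategy_set}) and arranged so that it multiplies a difference that vanishes as $\eta \to \bar{\eta}$, while the remaining terms are $N$-independent and controlled by the bounds $L_1, L_2$ of Lemma \ref{lem:bded_grad_1}. The only cosmetic differences are that you use a two-term splitting with explicit Lipschitz moduli (yielding explicit $\mu_1,\mu_2$), whereas the paper uses a three-term splitting and argues via plain continuity of $\bar{s}_\eta$ and its derivatives in $\eta$, uniformly in $x$.
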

\begin{proof}
(i) Consider $T_1(\eta) = \int_0^1 \nabla_\eta \bar{s}_\eta(x)   \nabla_\eta \bar{s}_\eta(x) ^T dx$. Since $\nabla_\eta \bar{s}_\eta(x)$ is continuous in $\eta$ by Assumption \ref{a:smooth}, (i) holds.
\\ \\
(ii) Consider $T_2^N(\eta) = \int_0^1 ( \bar{s}^{[N]}(x) - \bar{s}_\eta(x) ) \nabla_\eta^2 \bar{s}_\eta(x) dx$.
By Assumption \ref{a:smooth}, for any $\epsilon_3>0$, there exists $ \mu_2$ such that if $\|\eta -\bar{\eta}\| < \mu_2$, 
\begin{align*}
    \left | \bar{s}_\eta(x) - \bar{s}_{\bar{\eta}}(x) \right| &\le \frac{1}{3 L_2}\epsilon_3, \\
    \left | \frac{\partial^2 \bar{s}_\eta(x)}{\partial \eta_i \partial \eta_j} - \frac{\partial^2 \bar{s}_{\bar{\eta}}(x)}{\partial \eta_i \partial \eta_j} \right| &\le \frac{1}{3 s_{\rm max}} \epsilon_3 \qquad \forall i,j \in\{ 1,...,n\}
\end{align*}
for any $x\in[0,1]$ where $L_2$ is as in Lemma \ref{lem:bded_grad_1}.
Then, for $\|\eta - \bar{\eta}\| \le \mu_2$, it holds that
\begingroup
\allowdisplaybreaks
\begin{align*}
    \Big | [T_2^N(\eta)&]_{ij} - [T_2^N(\bar{\eta})]_{ij} \Big | \le \left | \int_0^1 ( \bar{s}^{[N]}(x) - \bar{s}_\eta(x) ) \frac{\partial^2 \bar{s}_\eta(x)}{\partial\eta_i\partial\eta_j} \right.\\
    &\qquad \left. - ( \bar{s}^{[N]}(x) - \bar{s}_{\bar{\eta}}(x) ) \frac{\partial^2 \bar{s}_{\bar{\eta}}(x)}{\partial\eta_i\partial\eta_j} dx \right |\\
    &\le \int_0^1 \left |  \bar{s}^{[N]}(x) \right|  \left | \frac{\partial^2 \bar{s}_\eta(x)}{\partial\eta_i\partial\eta_j} - \frac{\partial^2 \bar{s}_{\bar{\eta}}(x)}{\partial\eta_i\partial\eta_j}\right|  \\
    &\qquad + \left| \bar{s}_{\bar{\eta}}(x) - \bar{s}_{\eta}(x) \right| \left|  \frac{\partial^2 \bar{s}_{\bar{\eta}}(x)}{\partial\eta_i\partial\eta_j} \right|\\
    &\qquad + | \bar{s}_\eta(x)|  \left| \frac{\partial^2 \bar{s}_{\bar{\eta}}(x)}{\partial\eta_i\partial\eta_j}  -  \frac{\partial^2 \bar{s}_\eta(x)}{\partial\eta_i\partial\eta_j} \right| dx\\
    &\le \int_0^1 s_{\rm max}  \frac{\epsilon_3}{3s_{\rm max}}   + \frac{\epsilon_3}{3L_2} L_2 + s_{\rm max} \frac{\epsilon_3}{3 s_{\rm max}}  dx = \epsilon_3
\end{align*}
\endgroup
where the last inequality holds by Lemma \ref{lem:bded_grad_1}.
\end{proof}

\subsection{Omitted proofs}
\noindent \textbf{Proof of Proposition \ref{lem:homogeneous}}\\
By Assumption \ref{a:internality}, 
it follows from \eqref{lq_nash_eq} that, for any $\eta, \bar{\eta} \in \Xi$, we have
$
    (\mathbb{I} - \eta_2 \mathbb{W}) \bar{s}_\eta = \eta_1 \mathds{1} \textrm{ and }
    (\mathbb{I} - \bar{\eta}_2 \mathbb{W}) \bar{s}_{\bar{\eta}} = \bar{\eta}_1 \mathds{1}.
$
Subtracting one expression from the other, we get
\begin{align*}
    (\eta_1 &- \bar{\eta}_1) \mathds{1} = (\bar{s}_\eta - \bar{s}_{\bar{\eta}}) - \eta_2 \mathbb{W} \bar{s}_\eta + \bar{\eta}_2 \mathbb{W} \bar{s}_{\bar{\eta}}  \\
    &= (\bar{s}_\eta - \bar{s}_{\bar{\eta}}) - \eta_2 \mathbb{W} (\bar{s}_\eta - \bar{s}_{\bar{\eta}}) + (\bar{\eta}_2 - \eta_2) \mathbb{W} \bar{s}_{\bar{\eta}}  \\ \Leftrightarrow
    (\eta_1 &- \bar{\eta}_1) \mathds{1} + (\eta_2 - \bar{\eta}_2) \mathbb{W} \bar{s}_{\bar{\eta}}  = (\mathbb{I} - \eta_2 \mathbb{W}) (\bar{s}_\eta - \bar{s}_{\bar{\eta}}) .
\end{align*}

Taking the norm of both sides yields
\begin{align}
    \|(\eta_1 &- \bar{\eta}_1) \mathds{1} + (\eta_2 - \bar{\eta}_2) \mathbb{W} \bar{s}_{\bar{\eta}} \|_{L^2} \label{twoparam_upperbound}\\
    &= \| (\mathbb{I} - \eta_2 \mathbb{W}) (\bar{s}_\eta - \bar{s}_{\bar{\eta}}) \|_{L^2} 
    \le \| \mathbb{I} - \eta_2 \mathbb{W} \|  \| \bar{s}_\eta - \bar{s}_{\bar{\eta}} \|_{L^2} \nonumber\\
    &\le (1 + \eta_2 \lambda_{\rm max}(\mathbb{W})) \| \bar{s}_\eta - \bar{s}_{\bar{\eta}} \|_{L^2} 
    < 2 \| \bar{s}_\eta - \bar{s}_{\bar{\eta}} \|_{L^2} \nonumber
\end{align}
where the last inequality holds since by Assumption \ref{a:feas_eta}, $\eta_2 \lambda_{\rm max}(\mathbb{W}) < 1$ for all $\eta \in \Xi$.
Let $\bar{z}_{\bar{\eta}} := \mathbb{W} \bar{s}_{\bar{\eta}}$ be decomposed into two orthogonal components $\bar{z}_{\bar{\eta}} = \bar{z}_{\bar{\eta}}^{\parallel} + \bar{z}_{\bar{\eta}}^\bot$ where $\bar{z}_{\bar{\eta}}^{\parallel} = \gamma \mathds{1}$ for some $\gamma  \in\mathbb{R}$ and $\langle \bar{z}_{\bar{\eta}}^\bot, \mathds{1} \rangle = 0$.
The left hand side of \eqref{twoparam_upperbound} can then be lower bounded as follows 
\begin{align}
    &\|(\eta_1 - \bar{\eta}_1) \mathds{1} + (\eta_2 - \bar{\eta}_2) (\bar{z}_{\bar{\eta}}^{\parallel} + \bar{z}_{\bar{\eta}}^\bot) \|^2_{L^2} \label{twoparam_lowerbound} \\
    &= \|( (\eta_1 - \bar{\eta}_1) + \gamma(\eta_2 - \bar{\eta}_2) )  \mathds{1} + (\eta_2 - \bar{\eta}_2)  \bar{z}_{\bar{\eta}}^\bot \|^2_{L^2} \nonumber\\
    &\ge ( ( (\eta_1 - \bar{\eta}_1) + \gamma(\eta_2 - \bar{\eta}_2) )^2 + (\eta_2 - \bar{\eta}_2)^2) \bar{\nu} \nonumber\\
    &= \begin{bmatrix}
        \eta_1 - \bar{\eta}_1& \eta_2 - \bar{\eta}_2
    \end{bmatrix}
    \begin{bmatrix}
        1 & \gamma \\ \gamma & \gamma^2 + 1
    \end{bmatrix}
    \begin{bmatrix}
        \eta_1 - \bar{\eta}_1\\ \eta_2 - \bar{\eta}_2
    \end{bmatrix} \bar{\nu} \nonumber \\
    &\ge \lambda_{\rm min} \begin{pmatrix}
        1 & \gamma \\ \gamma & \gamma^2 + 1
    \end{pmatrix} \bar{\nu} \begin{Vmatrix}
        \eta_1 - \bar{\eta}_1\\ \eta_2 - \bar{\eta}_2
    \end{Vmatrix}^2 =: \lambda_{m} \bar \nu \|\eta-\bar \eta\|^2\notag
\end{align}
where $\bar{\nu} := \min(1,\|\bar{z}_{\bar{\eta}}^\bot\|_{L^2})$.
We distinguish two cases.

\textbf{Case 1:}
If $\bar{z}_{\bar{\eta}} \ne \gamma \mathds{1}$, it must be that $\bar{z}_{\bar{\eta}}^\bot \ne 0$. 
The minimal eigenvalue $\lambda_{m}= ((\gamma^2+2)-\sqrt{(\gamma^2+2)^2-4})/2 $ in (\ref{twoparam_lowerbound}) is positive for any $\gamma$, 
hence, combining (\ref{twoparam_upperbound}) and (\ref{twoparam_lowerbound}) yields parameter identifiability
\begin{align*}
    \|\eta - \bar{\eta}\| &< 2(\lambda_{m}\bar{\nu})^{-\frac{1}{2}} \| \bar{s}_\eta - \bar{s}_{\bar{\eta}} \|_{L^2}
    =: L_{\bar{\eta}} \| \bar{s}_\eta - \bar{s}_{\bar{\eta}} \|_{L^2}.
\end{align*} 

\textbf{Case 2:}
If $\bar{z}_{\bar{\eta}} = \gamma \mathds{1}$ (i.e., when $\bar{z}_{\bar{\eta}}^\bot = 0$), then the left hand side of \eqref{twoparam_upperbound} becomes 
$
    \|(\eta_1 - \bar{\eta}_1) \mathds{1} + (\eta_2 - \bar{\eta}_2) \mathbb{W} \bar{s}_{\bar{\eta}} \|_{L^2}
   = \|(\eta_1 - \bar{\eta}_1) \mathds{1} + (\eta_2 - \bar{\eta}_2) \gamma \mathds{1} \|_{L^2} 
    = |(\eta_1+\gamma \eta_2) - (\bar{\eta}_1 + \gamma \bar{\eta}_2) | .
$
From \eqref{twoparam_upperbound}, we then obtain
\begin{align*}
    |(\eta_1+\gamma \eta_2) &- (\bar{\eta}_1 + \gamma \bar{\eta}_2) | 
    \le 2 \| \bar{s}_\eta - \bar{s}_{\bar{\eta}} \|_{L^2} \\
    &=: L_{\bar{\eta}} \| \bar{s}_\eta - \bar{s}_{\bar{\eta}} \|_{L^2} .
\end{align*}
Hence $\eta_1 + \gamma \eta_2$ is identifiable.
Finally, to show that when $\bar{z}_\eta = \gamma \mathds{1}$, $\eta$ is not identifiable, we provide a counterexample.
Consider a simple LQ graphon game satisfying Assumption~\ref{a:feas_eta} and \ref{a:internality} with constant graphon $W(x,y) = c$ for some $c\in\mathbb{R}$.  
By \eqref{lq_nash_eq}, for any $\eta$, the unique Nash equilibrium of this LQ game and its corresponding local aggregate are
$
    \bar{s} = \frac{\eta_1}{1-\eta_2 c}\mathds{1} \textrm{ and }
    \bar{z} = c \bar{s} = c \frac{\eta_1}{1-\eta_2 c}\mathds{1}.
$
It is then clear that for any pair of parameters $(\eta_1,\eta_2)$ such that 
$
    \frac{\eta_1}{1 - \eta_2 c} = \frac{\bar{\eta}_1}{1 - \bar{\eta}_2 c},
$
the identifiability condition will be violated since $\begin{bmatrix}\eta_1 - \bar{\eta}_1\\\eta_2 - \bar{\eta}_2\end{bmatrix} \ne 0$ but $\| \bar{s}_{\eta} - \bar{s}_{\bar{\eta}} \|_{L^2} = 0$.

\noindent \textbf{Proof of Proposition \ref{lem:smooth_lq_hom}}

By Assumption \ref{a:feas_eta}, since $\eta_2 \|\mathbb{W}\| < 1$ for all $\eta_2 \in \Xi$, the homogeneous LQ game equilibrium in \eqref{lq_nash_eq} can be rewritten by using the Neumann series
$
    \bar{s}_\eta = (\mathbb{I} - \eta_2 \mathbb{W})^{-1} \eta_1 \mathds{1}= \eta_1  \sum_{k=0}^\infty \eta_2^k \mathbb{W}^k \mathds{1}. 
$
Hence, for each $x$,
\begin{align*}
    \bar{s}_\eta(x) &= \eta_1  \sum_{k=0}^\infty \eta_2^k (\mathbb{W}^k \mathds{1})(x)
    = \eta_1 f_0(\eta_2, x)
\end{align*}
where $f_h(\eta_2,x)$ is as defined in Lemma \ref{lem:series} with $\alpha_k(x) := (\mathbb{W}^k \mathds{1})(x)$. 
We next compute the partial derivatives of $\bar{s}_\eta$ and express them in terms of $f_h(\eta_2,x)$
\begingroup
\allowdisplaybreaks
\begin{align*}
    \frac{\partial \bar{s}_\eta(x)}{\partial \eta_1} &=  f_0(\eta_2,x), \quad \frac{\partial^2 \bar{s}_\eta(x)}{\partial \eta_1^2} = 0,\\
    \frac{\partial \bar{s}_\eta(x)}{\partial \eta_2} &= \eta_1 \sum_{k=1}^\infty k \eta_2^{k-1} \mathbb{W}^k \mathds{1} = \eta_1 f_1(\eta_2,x),\\
    \frac{\partial^2 \bar{s}_\eta(x)}{\partial \eta_2^2} &= \eta_1 \sum_{k=2}^\infty k(k-1) \eta_2^{k-2} \mathbb{W}^k \mathds{1} = \eta_1 f_2(\eta_2,x),\\
    \frac{\partial^2 \bar{s}_\eta(x)}{\partial \eta_1 \partial \eta_2} &= \frac{\partial^2 \bar{s}_\eta(x)}{\partial \eta_2 \partial \eta_1} = \sum_{k=1}^\infty k \eta_2^{k-1} \mathbb{W}^k \mathds{1} = f_1(\eta_2,x).
\end{align*}
\endgroup
Since $\eta_2^{\rm max} <\infty$ exists by compactness of $\Xi$, $|\alpha_k(x)| \le \|\mathbb{W}\|^k_{\infty}$ and $\eta_2^{\rm max} \|\mathbb{W}\|_\infty < 1$ by assumption, Lemma \ref{lem:series} with $\beta=\|\mathbb{W}\|_\infty$ ensures that $f_h(\eta_2,x)$ is well defined, uniformly bounded by a positive constant $B_h$ and $L_h$-Lipschitz continuous in $\eta_2$ uniformly in $x$ for any $h \in \mathbb{N}$.
Hence, the equilibrium and its partial derivatives are well defined.
To prove Lipschitz continuity, it suffices to show that $\eta_1 f_h(\eta_2,x)$ is Lipschitz continuous in $\eta$ for all $x\in[0,1]$ and for $h=0,1,2$.
This holds since
\begin{align*}
    | \eta_1 f_h&(\eta_2,x) - \tilde{\eta}_1 f_h(\tilde{\eta}_2, x) |\\
    &= | (\eta_1 - \tilde{\eta}_1) f_h(\eta_2,x) - \tilde{\eta}_1 (f_h(\tilde{\eta}_2, x) - f_h(\eta_2,x)) |\\
    &\le | \eta_1 - \tilde{\eta}_1| | f_h(\eta_2,x) | + |\tilde{\eta}_1| |f_h(\tilde{\eta}_2, x) - f_h(\eta_2,x) |\\
    &\le | \eta_1 - \tilde{\eta}_1| B_h + \eta_1^{\rm max} L_h |\eta_2 - \tilde{\eta}_2 |\\
    &\le \max(B_h, \eta_1^{\rm max} L_h) (( \eta_1 - \tilde{\eta}_1)^2 + (\eta_2 - \tilde{\eta}_2)^2)\\
    &=: K_h \| \eta - \tilde{\eta} \| \quad \forall x \in [0,1]
\end{align*}
for $K_h := \max(B_h, \eta_1^{\rm max} L_h)$ and $\eta_1^{\rm max} := \max_{\eta \in \Xi} |\eta_1| < \infty$ since $\Xi$ is compact. 

\noindent \textbf{Proof of Proposition \ref{lem:sbm_id}}

By Lemma \ref{lem:sbm}, we can relate the graphon game equilibrium to a vector $\bar{\bar{s}}_{\eta} \in \mathbb{R}^K$ satisfying
\begin{align*}
    \bar{\bar{s}}_{\eta} - \Delta_{\eta} Q \Delta_\pi \bar{\bar{s}}_{\eta}  &= \theta_1 \mathds{1}.
\end{align*}
Subtracting to $\bar{\bar{s}}_{\bar{\eta}}$ the expression for the generic equilibrium $\bar{\bar{s}}_{\eta} \in \mathbb{R}^K$, we get
\begin{align*}
    (\bar{\bar{s}}_{\eta} - \bar{\bar{s}}_{\bar{\eta}}) - \Delta_\eta Q \Delta_\pi \bar{\bar{s}}_{\eta} + \Delta_{\bar{\eta}} Q \Delta_\pi \bar{\bar{s}}_{\bar{\eta}} &= 0\\
    (\bar{\bar{s}}_{\eta} - \bar{\bar{s}}_{\bar{\eta}}) - \Delta_\eta Q \Delta_\pi (\bar{\bar{s}}_{\eta} - \bar{\bar{s}}_{\bar{\eta}}) + (\Delta_{\bar{\eta}} - \Delta_\eta) Q \Delta_\pi \bar{\bar{s}}_{\bar{\eta}} &= 0\\
    (\mathbb{I} - \Delta_\eta Q \Delta_\pi )(\bar{\bar{s}}_{\eta} - \bar{\bar{s}}_{\bar{\eta}})   = (\Delta_\eta - \Delta_{\bar{\eta}}) &Q \Delta_\pi \bar{\bar{s}}_{\bar{\eta}}.
\end{align*}
Then, taking the norm of both sides, we have
\begin{align}
    &\| (\Delta_\eta - \Delta_{\bar{\eta}}) Q \Delta_\pi \bar{\bar{s}}_{\bar{\eta}} \| = \| (I - \Delta_\eta Q \Delta_\pi)(\bar{\bar{s}}_{\eta} - \bar{\bar{s}}_{\bar{\eta}})  \| \label{sbm_bound_eq}\\
    &\le (\| I\| + \|\Delta_\eta\| \cdot \|Q \Delta_\pi \| )  \| \bar{\bar{s}}_{\eta} - \bar{\bar{s}}_{\bar{\eta}} \| \nonumber\\
    &\overset{(a)}{\le} (1 + \max_k(\eta_k)  \lambda_{\rm max}(Q \Delta_\pi)  )  \| \bar{\bar{s}}_{\eta} - \bar{\bar{s}}_{\bar{\eta}} \| \overset{(b)}{\le} 2 \| \bar{\bar{s}}_{\eta} - \bar{\bar{s}}_{\bar{\eta}} \| \nonumber
\end{align} 
where (a) follows from \cite[Lemma 10]{parise2019graphon} since $Q$ is symmetric, and since $\lambda_{\rm max}(Q\Delta_\pi)=\lambda_{\rm max}(\mathbb{W})$, (b) follows by Assumption \ref{a:feas_eta} as $\max_k(\eta_k) \lambda_{\rm max}(Q\Delta_\pi) < 1$.
Let $\bar{\bar{z}}_{\bar{\eta}} := Q \Delta_\pi \bar{\bar{s}}_{\bar{\eta}}$. 
The left hand side of \eqref{sbm_bound_eq} can be lower bounded as follows
\begin{align*}
    \textstyle \| (&\Delta_\eta -  \Delta_{\bar{\eta}}) \bar{\bar{z}}_{\bar{\eta}}  \|^2 = \bar{\bar{z}}_{\bar{\eta}}^T (\Delta_\eta - \Delta_{\bar{\eta}})^2 \bar{\bar{z}}_{\bar{\eta}}\\
    &=\textstyle \sum_{i=1}^K (\eta_i - \bar{\eta}_i)^2 ([\bar{\bar{z}}_{\bar{\eta}}]_i)^2 \ge \min_i([\bar{\bar{z}}_{\bar{\eta}}]_i)^2 \sum_{i=1}^K (\eta_i - \bar{\eta}_i)^2\\
    &=\min_i([\bar{\bar{z}}_{\bar{\eta}}]_i)^2  \|\eta - \bar{\eta}\|^2,
\end{align*}
where $[\bar{\bar{z}}_{\bar{\eta}}]_i>0$ since $[\bar{\bar{s}}_{\bar{\eta}}]_i=[\Delta_\eta \bar{\bar{z}}_{\bar{\eta}}+\theta_1\mathds{1}]_i>0$ and $\theta_1>0$.

From Lemma \ref{lem:sbm}, we can then use the relation $\| \bar{\bar{s}}_{\eta} - \bar{\bar{s}}_{\bar{\eta}} \|  \le \min_k(\pi_k)^{-\frac{1}{2}} \| \bar{s}_{\eta} - \bar{s}_{\bar{\eta}} \|_{L^2}$ to obtain 
\begin{align*}
    &\|\eta - \bar{\eta}\| \le 2 \min_i(\bar{\bar{z}}_{\bar{\eta}}^i)^{-1} \| \bar{\bar{s}}_{\eta} - \bar{\bar{s}}_{\bar{\eta}} \| \\
    & \textstyle \le \frac{2}{\min_i(\bar{\bar{z}}_{\bar{\eta}}^i) \sqrt{\min_k(\pi_k)}}   \| \bar{s}_{\eta} - \bar{s}_{\bar{\eta}} \|_{L^2} 
    =: L_{\bar{\eta}} \| \bar{s}_{\eta} - \bar{s}_{\bar{\eta}} \|_{L^2}.
\end{align*}

\noindent \textbf{Proof of Proposition \ref{lem:smooth_lq_sbm}}

Since, by Lemma \ref{lem:sbm}, the graphon equilibrium is piecewise constant with values corresponding to the components of $\bar{\bar{s}}_\eta=(I-\Delta_\eta Q \Delta_\pi)^{-1}\theta_1 \mathds{1}$, it suffices to show that the latter is twice Lipschitz continuously differentiable in $\eta$.

To prove Lipschitz continuity, let $\mathbb{V}_\eta := I - \Delta_\eta Q\Delta_\pi$, then $\bar{\bar{s}}_\eta = \theta_1 \mathbb{V}_\eta^{-1} \mathds{1}$. 
We have that $\| \Delta_\eta Q\Delta_\pi\| \le \| \Delta_\eta \| \|Q \Delta_\pi\| \le \max_k(\eta_k) \lambda_{\rm max}(Q \Delta_\pi)= \max_k(\eta_k) \lambda_{\rm max}(\mathbb{W}) < 1$ by Assumption \ref{a:feas_eta} and \cite[Lemma 10]{parise2019graphon} since $Q$ is symmetric. 
Then, from \cite[Theorem 2.3.1]{han2009theoretical}, we get
\begin{align*}
    \|\mathbb{V}_\eta^{-1}\| &= \| (I - \Delta_\eta Q\Delta_\pi)^{-1}\| 
    \le \frac{1}{1-\|\Delta_\eta Q\Delta_\pi\|}\\
    &\le \frac{1}{1 - \max_k(\eta_k) \lambda_{\rm max}(\mathbb{W})}
    \le \frac{1}{1-\alpha^*}
    =: \Bar{V}
\end{align*}
where we used $\max_{\eta \in \Xi} \max_k(\eta_k) \le \eta_{\bar{k}}^*$ for some $\bar{k}\in\{1,...,K\}$ and $\eta^* \in \Xi$, since $\Xi$ is compact and the fact that by Assumption \ref{a:feas_eta}, $\alpha^* :=\eta_{\bar{k}}^* \lambda_{\rm max}(\mathbb{W})<1$. 

By using this fact, it can be easily shown that under the given assumptions, the partial derivatives  of $\bar{\bar{s}}_\eta$ exist\footnote{The partial derivatives of $\bar{\bar{s}}_\eta=\theta_1\mathbb{V}^{-1}_\eta\mathds{1}$ with respect to $\eta$ can be computed by using the identity $\frac{\partial \mathbb{V}_\eta^{-1}}{\partial \eta}=-\mathbb{V}_\eta^{-1}\frac{\partial \mathbb{V}_\eta}{\partial \eta} \mathbb{V}_\eta^{-1}$, \cite[Eq (59)]{petersen2008matrix}.} and can be bounded uniformly in $\eta$
as follows
 \begin{align*}
  \left \| \frac{\partial  \bar{\bar{s}}_\eta}{\partial \eta_i} \right\|  &\le \theta_1  \bar{V}^2  \lambda_{\rm max}(\mathbb{W})  \sqrt{K} =:M_1\\
    \left \| \frac{\partial^2 \bar{\bar{s}}_\eta}{\partial \eta_j \partial \eta_i} \right\| &\le 2  \theta_1  \bar{V}^3  \lambda_{\rm max}(\mathbb{W})^2  \sqrt{K}=:M_2\\
    \left \| \frac{\partial^3 \bar{\bar{s}}_\eta}{\partial \eta_l \eta_j \eta_i} \right\| &\le 6  \theta_1  \bar{V}^4  \lambda_{\rm max}(\mathbb{W})^3  \sqrt{K}=:M_3.
\end{align*}
The conclusion then follows from Lemmas \ref{lem:bded_grad_2} and \ref{lem:sbm}.
We illustrate this result for the equilibrium. 
First, note that the gradient of $\bar{\bar{s}}_\eta$ can be bounded uniformly in $\eta$ as follows
\begingroup
\allowdisplaybreaks
\begin{align*}
    &\|\nabla_\eta \bar{\bar{s}}_\eta\|_2 \le \| \nabla_\eta \bar{\bar{s}}_\eta \|_F 
    =  \sqrt{\sum_{i=1}^K \left \| \frac{\partial \bar{\bar{s}}_\eta}{\partial \eta_i}  \right \|^2_2}\le \sqrt{K}M_1=: L_1
\end{align*}
\endgroup
where $\|\cdot\|_F$ denotes the Frobenius norm. 
For any $k$, this implies that $\| \nabla_\eta [\bar{\bar{s}}_\eta]_k\| \le \| \nabla_\eta \bar{\bar{s}}_\eta \|_2 \le L_1$. 
Hence, by Lemma~\ref{lem:bded_grad_2}, $[\bar{\bar{s}}_\eta]_k$ is Lipschitz continuous with constant $L_1$.
It follows that the graphon equilibrium $\bar{s}_\eta$ is uniformly Lipschitz in $\eta$ since for any $x\in[0,1]$, there exists $ k$ such that
\begin{align*}
    | \bar{s}_\eta(x) - \bar{s}_{\tilde{\eta}}(x) | &= | [\bar{\bar{s}}_\eta]_k - [\bar{\bar{s}}_{\tilde{\eta}}]_k |
    \le L_1 \|\eta - \tilde{\eta} \|. 
\end{align*}
Similar arguments apply to the first and second order derivatives of $\bar{s}_\eta$. 
\endgroup


\bibliographystyle{ieeetr}
\bibliography{references}

\end{document}